\documentclass[11pt]{article}
\usepackage[margin=1in]{geometry}
\usepackage{amsmath,amssymb,amsthm,mathtools}
\usepackage{booktabs}
\usepackage{enumitem}
\usepackage{microtype}
\usepackage{array}
\usepackage{xcolor}
\usepackage[numbers]{natbib}
\usepackage{thmtools}

\makeatletter
\@ifundefined{newcounteralias}{}{%
  \renewcommand\thmt@autorefsetup{\@xa\def\csname\thmt@envname autorefname\@xa\endcsname\@xa{\thmt@thmname}}%
}
\makeatother

\usepackage{physics}
\usepackage{authblk}
\usepackage{hyperref}
\usepackage{cleveref}

\declaretheorem[name=Theorem, numberwithin=section]{theorem}

\declaretheorem[name=Lemma, sibling=theorem]{lemma}
\declaretheorem[name=Proposition, sibling=theorem]{proposition}
\declaretheorem[name=Corollary, sibling=theorem]{corollary}

\declaretheorem[name=Definition, sibling=theorem, style=definition]{definition}
\declaretheorem[name=Problem, sibling=theorem, style=definition]{problem}

\declaretheorem[name=Remark, sibling=theorem, style=remark]{remark}

\newcommand{\F}{\mathbb F}
\newcommand{\wt}{\operatorname{wt}}
\newcommand{\Aut}{\operatorname{Aut}}
\newcommand{\PAut}{\operatorname{PAut}}
\newcommand{\GL}{\operatorname{GL}}
\newcommand{\GLtwo}[1]{\GL(#1,\mathbb F_2)}
\newcommand{\Span}{\operatorname{span}}

\newcommand{\supp}{\operatorname{supp}}

\newcommand{\SURJ}{\mathrm{SURJ\text{-}QA}}
\newcommand{\cP}{\mathcal P}

\title{Phantom Codes: Hardness, Rate Optimal qLDPC Constructions, and Distance Limits}
\author[1]{Rui Mao}
\author[1]{Weixiao Sun}
\author[1]{Shengyu Zhang \thanks{shengyzhang@tencent.com}}
\affil[1]{Tencent Quantum Laboratory}

\date{\today}

\begin{document}
\maketitle

\begin{abstract}
  An $[[n,k,d]]$ stabilizer code is \emph{phantom} if every in-block logical CNOT gate can be realized by a permutation of its physical qubits.
  This eliminates the large and complicated physical overhead normally required for logical entangling gates.
  Yet this symmetry is highly restrictive: phantom codes are rare, the number of logical qubits is limited to $k=O(\log n)$, and no phantom qLDPC family with growing logical dimension was previously known.

  We make three contributions.
  \begin{enumerate}
    \item
          We prove that recognizing phantomness of a given stabilizer code is at least as hard as Graph Isomorphism, even for CSS codes encoding only $k=2$ logical qubits.
    \item
          We propose the first rate optimal phantom qLDPC families: for every fixed $D$, our CSS families achieve the maximal logical scaling $k=\Theta(\log n)$ and distance $d\geq D$.
    \item
          We show a distance no-go theorem: every phantom code family with $k=\omega(\sqrt{\log n})$ and check weight $w=O(1)$ satisfies $d\leq w$ for sufficiently large $n$.
  \end{enumerate}
  Thus our fixed-distance qLDPC families are distance-scaling optimal at the maximal logical scaling $k=\Theta(\log n)$.
\end{abstract}

\tableofcontents

\section{Introduction}\label{sec:intro}

Fault-tolerant logical entangling gates are a major source of overhead in quantum computation \cite{koh2026entangling}.
Code symmetries offer a way to reduce this overhead: a permutation of physical qubits that preserves the code can implement a nontrivial logical operation.
Such an operation requires only a change of qubit labels, with no additional quantum operation.
Sayginel et al. developed systematic methods for extracting logical Clifford gates from code automorphisms, while Guyot and Jaques studied the addressability of logical operations through physical permutations \cite{sayginel2025fault,guyot2026addressability}.
Koh et al. took this idea further by introducing \emph{phantom codes}, in which every logical CNOT gate within a single code block can be implemented by relabelling physical qubits \cite{koh2026entangling}.

However, this convenience of phantom codes comes with restrictions.
First, phantom codes are rare, and their numerical discovery is computationally demanding.
Writing $n$ for the number of physical qubits and $k$ for the number of logical qubits.
Koh et al. exhaustively enumerated approximately $2.71\times10^{10}$ inequivalent CSS codes with $n\leq14$, finding only $132\,305$ phantom codes with distance at least two \cite[App.~C]{koh2026entangling}.
Reaching this modest block length required approximately $10\,000$ computational cores running for $1.5$ months \cite[App.~C.4]{koh2026entangling}.
These limitations motivate the question of how difficult it is to recognize phantomness in a given code, whose computational complexity remained open.

Second, phantomness severely limits the number of logical qubits that can be encoded.
Morris and Malz proved that binary phantom codes of distance at least two satisfy $k\leq\log_2(n+1)$ when $k\neq4$, and classified the nontrivial CSS codes saturating this bound \cite{morris2026constraints}.
Thus the largest possible asymptotic logical scaling is $k=\Theta(\log n)$.
The next natural question is whether phantom codes at this maximal logical scaling can also achieve favorable error-correction parameters, particularly sparse checks and large distance.

Third, no phantom quantum low-density parity-check (qLDPC) family with growing logical dimension was previously known \cite{koh2026entangling,he2026logical,morris2026constraints}.
A qLDPC family keeps both check weights (the number of qubits involved in each stabilizer check) and qubit degrees (the number of checks involving each qubit) bounded independently of the code length.
The simplex--repetition HGP construction does give qLDPC families of growing code length when $k$ is fixed, but its check weights and qubit degrees increase with $k$ \cite[App.~B]{he2026logical}.
It remained unclear whether a phantom qLDPC family with growing logical dimension exists and, if so, whether it can achieve maximal logical scaling, large distance, or both.

In this paper, we address three questions arising from these restrictions.
\emph{How hard is it to recognize phantomness?
  Do growing phantom qLDPC families exist at maximal logical scaling?
  If so, what distance can they achieve?}
Our results can be summarized as follows.

\paragraph{Main results.}
First, we show that deciding whether a code is phantom is at least as hard as the Graph Isomorphism (GI) problem.

\begin{restatable}[GI-hardness of recognizing phantomness]{theorem}{gihardness}
  \label{thm:GI-hard}
  Graph Isomorphism admits a polynomial-time many-one reduction to recognizing binary phantom stabilizer codes, even when the input is CSS and the logical dimension is fixed to $k=2$.
\end{restatable}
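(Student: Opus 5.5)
The plan is to reduce from Graph Isomorphism through a $k=2$ CSS code whose three nontrivial logical $X$-classes each carry a rigid combinatorial encoding of a graph. For $k=2$ the logical action of any $X$- and $Z$-preserving qubit permutation of a CSS code is determined by its action on $\bar X$-cosets. This action is an element of $\GLtwo{2}\cong S_3$, which permutes the three nonzero classes $\{\bar X_1,\bar X_2,\bar X_1\bar X_2\}$. The induced action on the $Z$-side is then forced to be the inverse transpose. Logical $\mathrm{CNOT}_{12}$ fixes $\bar X_2$ and swaps $\bar X_1\leftrightarrow \bar X_1\bar X_2$, and $\mathrm{CNOT}_{21}$ fixes $\bar X_1$ and swaps $\bar X_2\leftrightarrow\bar X_1\bar X_2$. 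So the code is phantom iff its permutation automorphism group realizes both of these transpositions. Equivalently, it must induce the full $S_3$ on the three logical classes.

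Given graphs $G_1,G_2$ on $m$ vertices, I first apply a standard rigid graph-to-code gadget, for instance the Petrank--Roth reduction from GI to linear code equivalence. This gives binary codes $C(G_1),C(G_2)$ of equal length $N$ such that $C(G_1)$ and $C(G_2)$ are permutation equivalent iff $G_1\cong G_2$, and such that permutation automorphisms of $C(G)$ correspond to automorphisms of $G$. I then build three blocks of qubits $Q_1,Q_2,Q_3$, decorated with $C(G_1),C(G_1),C(G_2)$ respectively. The $X$- and $Z$-check matrices are chosen so that:
\begin{enumerate}
\item[(i)] the code is CSS with $k=2$;
\item[(ii)] the minimum-weight representatives of $\bar X_1$, $\bar X_2$, $\bar X_1\bar X_2$ are supported exactly on $Q_1$, $Q_2$, $Q_3$ respectively, with a shared padding structure making the three sectors look identical apart from the graph gadget;
\item[(iii)] the check structure restricted to $Q_i$ reproduces $C(G_{\cdot})$.
\end{enumerate}
A concrete template is to take the three sectors as three copies of a small fixed rigid CSS skeleton encoding the $[3,2]$ parity relation $\bar X_1\bar X_2\cdot(\bar X_1\bar X_2)^{-1}=1$ among the sectors. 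Each sector is then tensored/glued with the corresponding gadget code. All of this is clearly polynomial time.

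Next I prove the two directions. If $\pi:G_1\to G_2$ is an isomorphism, the induced code equivalence $C(G_1)\to C(G_2)$, combined with the block swaps $Q_1\leftrightarrow Q_3$ and $Q_2\leftrightarrow Q_3$ (and identity elsewhere), yields qubit permutations preserving both stabilizer groups. These act on logical classes as the two required transpositions, so the code is phantom. Conversely, suppose the code is phantom. Then some permutation $\sigma$ realizes $\mathrm{CNOT}_{12}$ and so maps the class $\bar X_1$ onto $\bar X_1\bar X_2$. Since $\sigma$ preserves weights and stabilizers, it maps the set of minimum-weight representatives of $\bar X_1$ onto that of $\bar X_1\bar X_2$. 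By (ii), $\sigma(Q_1)=Q_3$, and because $\sigma$ also preserves the checks, $\sigma|_{Q_1}$ is a code equivalence $C(G_1)\to C(G_2)$, hence $G_1\cong G_2$.

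The main obstacle is step (ii) together with rigidity. I must show that no "unexpected" automorphism or minimum-weight representative exists, that is, that the sectors $Q_i$ are intrinsically recoverable as the union of supports of minimum-weight logical representatives of each class. I must also show that every code automorphism preserves the block decomposition and restricts to the gadget codes. I would first try to enforce this by weight/degree bookkeeping: give the skeleton qubits and gadget qubits distinct column weights in $H_X,H_Z$ so that automorphisms preserve qubit types, and choose gadget padding large enough that cross-sector logical representatives are strictly heavier. Then it remains to verify that the Petrank--Roth gadget's equivalence-iff-isomorphism property survives the gluing.
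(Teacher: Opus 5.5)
Your outline has the right architecture: three blocks carrying $G_1,G_1,G_2$, the three nonzero logical $X$-classes tied to the blocks, and $\GLtwo{2}\cong S_3$ acting by block permutations. However, there is a genuine gap. The code $(V_X,V_Z)$ is never actually defined, and everything that makes the reduction correct is deferred.

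\emph{The construction is missing.} The ``small fixed rigid CSS skeleton \dots\ tensored/glued with the gadget code'' is not a construction. Properties (ii)--(iii) are what has to be proved, and so are the claims that every element of $\PAut(V_X,V_Z)$ preserves the block decomposition and that block swaps built from a graph isomorphism preserve both stabilizer spaces. You list these as the ``main obstacle'' and leave them open.

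\emph{The black box is false as stated.} Permutation automorphisms of the Petrank--Roth code $[I\mid I\mid I\mid D_G]$ do not in general come from $\Aut(G)$. If $v$ has degree one with edge $e$, swapping the coordinate of $v$ with a private coordinate of $e$ fixes every codeword. Also, if $G_1$ is disconnected, automorphisms of the three-block union can exchange components across blocks.

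\emph{Properties (ii) and (iii) conflict.} In natural instantiations the minimum-weight logical representatives live only on vertex coordinates. So ``$\sigma(Q_1)=Q_3$'' for entire gadget blocks does not follow from (ii).

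The paper resolves this without any gluing. It replaces $X,Y$ by $X\vee K_2,Y\vee K_2$, which are connected, have minimum degree $\geq2$, and have the same isomorphism relation. It forms $Z=X'\sqcup Y'\sqcup Y'$ and takes $V_X=C(Z)$, the single Petrank--Roth code of $Z$. It sets $V_Z=(C(Z)\oplus L_0)^\perp$ with $L_0=\Span\{s_1+s_2,s_1+s_3\}$, where $s_i$ is the vertex indicator of the $i$-th component. Then:
\begin{enumerate}
\item The weight-five words are exactly the edge rows, and minimum degree $\geq2$ makes the vertex/private split intrinsic.
\item Hence $\PAut(C(Z))\to\Aut(Z)$ is surjective, with kernel $(S_3)^M$ fixing all vertex coordinates.
\item Connectivity makes every automorphism permute the $s_i$, so $L_0$ is automatically invariant and $\PAut(V_X,V_Z)=\PAut(C(Z))$.
\item The quotient image is then exactly the component-permutation group $P(Z)$, which is $S_3$ iff $X\cong Y$.
\end{enumerate}

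Your minimum-weight-representative idea for the converse can in fact be run on this concrete code. The vector $s_i+s_j$ is the unique minimum-weight element of its coset, since adding $t$ edge rows costs $3t$ private weight and saves at most $2t$ vertex weight. A permutation inducing $\tau\in S_3$ therefore maps each $V(H_i)$ onto $V(H_{\tau(i)})$, and carries edge rows to edge rows. But this only works once a concrete code of this kind is in hand.
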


Second, we construct phantom qLDPC families with any prescribed constant distance $d$ and the maximal logical scaling $k=\Theta(\log n)$.
Here distance $d$ is the smallest number of qubits supporting a nontrivial logical Pauli operation and measures the code's protection against errors.

\begin{restatable}[Phantom qLDPC codes at arbitrary fixed distance]{theorem}{phantomqldpc}
  \label{thm:qldpc-fixedD}
  For every fixed integer $D\geq1$, there is a family of binary phantom qLDPC CSS codes satisfying $k=\Theta(\log n)$ and $d\geq D$.
\end{restatable}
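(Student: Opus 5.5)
The plan is to start from a phantom CSS code whose logical structure is governed by the natural action of $\GLtwo{k}$ on $\F_2^k$, then sparsify it with a weight-reduction procedure chosen to be equivariant, so that the qubit permutations realizing logical CNOTs lift to the sparse code. Concretely, I take the base code to be the simplex--repetition hypergraph product of \cite[App.~B]{he2026logical}. Its one factor is the simplex code on the $2^k-1$ nonzero vectors of $\F_2^k$, where the codeword for a functional $f$ is the indicator of $\{v:f(v)=1\}$, and its other factor is a repetition code of length $L=\Theta(D)$. For this code:
\begin{itemize}[label=$\cdot$]
  \item $n_0=2^{O(k)}$ and its distance is at least $cD$ for a constant $c$ of my choosing;
  \item each transvection $E_{ij}\in\GLtwo{k}$, i.e.\ each logical CNOT, is realized by the permutation $v\mapsto E_{ij}v$ of the simplex coordinates, extended trivially along the repetition coordinate.
\end{itemize}
The only defect is that check weights and qubit degrees grow like $2^{\Theta(k)}$.

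The second step applies a weight-reduction procedure (copying, gauging, thickening and coning, in the style of Hastings and of Sabo et al.) to bring all check weights and qubit degrees down to $O(1)$. These procedures preserve $k$, multiply $n$ by at most a polynomial in $n_0$, and lose at most a constant factor in distance (for the thickening/coning steps with appropriately chosen parameters). Hence the result still has $k=\Theta(\log n)$, and $d\ge D$ once $c$ is large enough. I would verify these three parameter claims by quoting the weight-reduction theorems essentially verbatim.

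The main obstacle is phantomness of the reduced code. Each gadget (the tree or chain replacing a high-weight check, and the copies of a high-degree qubit) depends on an ordering of the check's support. An arbitrary permutation $\pi$ of the base code maps a gadget with one ordering to the gadget of the image check, which may carry a different ordering, so $\pi$ need not lift. Note that I only need each of the $k(k-1)$ transvection permutations to lift, not all of $\GLtwo{k}$.
\begin{itemize}[label=$\cdot$]
  \item \emph{First attempt: equivariant orderings.} Choose the orderings canonically from the linear structure. Every check support of the Hamming-type dual is built from triples $\{u,v,u+v\}$ and hyperplane complements. I would try to order each support by a rule that commutes with linear maps, for example a bounded-degree gadget defined intrinsically from the affine/linear structure of the support (a Cayley-type graph on the support), rather than from coordinates.
  \item \emph{Second attempt: fix the gadgets orbit by orbit.} Restrict to the group $G=\langle E_{ij}\rangle=\GLtwo{k}$ and pick one gadget per $G$-orbit of checks, transporting it along the orbit. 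This is consistent only if the stabilizer of each check in $G$ preserves its chosen gadget. The support of a check is a hyperplane complement or a similar structured set, so its stabilizer is a parabolic subgroup. That is too large for a tree gadget, so the gadget must itself be stabilizer-invariant with bounded degree. I expect this to be the crux.
  \item \emph{Fallback: change the base code.} If no bounded-degree invariant gadget exists for the full parabolic stabilizer, replace the base code by one whose checks have small, e.g.\ trivial, stabilizers under $G$. For example, use a product of the simplex factor with an auxiliary $\F_2^k$-labelled layer of size $2^{O(k)}$ that breaks the symmetry of the supports while keeping the induced logical action. Then the orbit-transport construction goes through automatically.
\end{itemize}
Finally, I would check that each lifted permutation preserves both the $X$- and $Z$-stabilizer groups, which holds by construction of the equivariant gadgets. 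I would also check that its logical action equals that of the base permutation, namely $E_{ij}$. This holds because weight reduction identifies logical operators canonically with those of the base code.
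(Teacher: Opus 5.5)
The gap is the step you flag as the crux, phantomness after sparsification, and it is not just unfinished. Weight reduction in the style of Hastings and Sabo et al.\ produces check weight bounded by a constant $w_0$ that does not depend on $D$, because you tune distance through the repetition length rather than through heavier checks. So for $D>w_0$ your target is a phantom family with $k=\Theta(\log n)$ and $w(S)\le w_0<D\le d$. The paper's distance no-go, \Cref{thm:stabilizer-no-go}, forbids exactly this. Lifting only the transvections does not help: their lifts generate a subgroup of $\PAut$ of the sparse code whose logical image is all of $\GLtwo{k}$.

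For every $D$, the three routes you sketch also fail, for one underlying reason. By \Cref{lem:bounded-degree-quotient,lem:large-prime-part}, suppose a group acts on a connected graph of maximum degree at most $\Delta$ and surjects onto $\GLtwo{m}$. Then every vertex orbit has size at least $2^{m(m-1)/2-O_\Delta(m)}$.
\begin{itemize}
\item Take a high-weight check of your base code, either a star of lines through a point or a hyperplane complement. Its stabilizer acts on its $2^{O(k)}$ support points through a group surjecting onto $\GLtwo{k-1}$. Hence no connected bounded-degree gadget invariant under that stabilizer exists, which rules out both the Cayley-type attempt and the orbit-transport attempt.
\item The fallback hits the same wall from the other side. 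Check stabilizers of trivial or small-prime order force orbits of at least $|\GLtwo{k}|_{>\Delta}\ge 2^{k^2/2-O(k)}$ checks. That gives $n\ge 2^{\Omega(k^2)}$ and $k=O(\sqrt{\log n})$, so no auxiliary layer of size $2^{O(k)}$ can break the symmetry.
\end{itemize}

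The paper's proof rests on two ideas missing from your plan. First, phantomness only requires each permutation to preserve the stabilizer \emph{spaces}, not a sparse generating set. The paper applies the variable-copying gadget to the weight-three Hamming presentation with $V_X^{\mathrm{out}}=0$. Then $(V_Z^{\mathrm{out}})^\perp=\operatorname{Rep}_R(S_k)$ however the non-equivariant injections $\rho_x$ are chosen. So $(x,r)\mapsto(Tx,r)$ preserves $V_Z^{\mathrm{out}}$ even though it does not permute the checks (\Cref{thm:qldpc-outer}). The price is $d_Z=1$, which is why this sparse phantom code is consistent with the no-go.

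Second, distance is restored not inside a sparse presentation but by concatenating every outer coordinate with one fixed CSS code whose sector distances are at least $D$ (\Cref{prop:block-concatenation}). Block permutations lift the outer automorphisms with the same logical action, and both sector distances are multiplied. Weights and degrees stay bounded by constants depending on $D$. The lifted outer checks have weight at least $D$, which is exactly the growth of $w$ with $D$ that the no-go makes unavoidable.
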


Third, we show that these families are optimal in distance scaling: in the regime below, the distance is eventually bounded by the check weight.
\begin{restatable}[Distance no-go]{theorem}{stabilizernogo}
  \label{thm:stabilizer-no-go}
  Let $\{S_t\}$ be an unbounded family of binary phantom stabilizer codes with lengths $n_t$, logical dimensions $k_t=\omega(\sqrt{\log n_t})$, generator weights $w(S_t)=O(1)$, and distances $d(S_t)$.
  Then $d(S_t)\leq w(S_t)$ for all sufficiently large $t$.
\end{restatable}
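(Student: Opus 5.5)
The plan is to exploit the huge automorphism group that phantomness forces. Let $S=S_t$ be a phantom code with parameters $n,k,w$. The permutation automorphisms of $S$ induce, through their action on logical operators, a group of logical Cliffords. Phantomness says this image contains every logical CNOT $\mathrm{CNOT}_{ij}$. These generate a copy of $\GLtwo{k}$, acting on the $X$-logicals by $A$ and on the $Z$-logicals by $A^{-T}$. So the induced group has order at least
\[
|\GLtwo{k}|\geq 2^{k^2-O(1)}.
\]
Let $G\leq \PAut(S)\leq \mathrm{Sym}(n)$ be the permutation group mapping onto it. I will not bound $|G|$ by $n!$, which is useless. Instead I will show that if $d>w$, then $G$ is constrained enough that $|G|\leq n^{O(\log n)}=2^{O(\log^2 n)}$, or better $2^{O(\log n)}$. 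This contradicts $k=\omega(\sqrt{\log n})$, since then $k^2=\omega(\log n)$. The target bound therefore has to be of the form $|G|\leq 2^{O(\log n)}$, i.e.\ polynomial in $n$. With a quasipolynomial bound the threshold would be $k=\omega(\log n)$ instead, so I aim for a polynomial bound.

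The key structural step uses $d>w$. Every stabilizer generator $g$ has weight at most $w<d$. I would like to use this as follows. Any Pauli of weight less than $d$ that commutes with the stabilizer is itself a stabilizer. Hence restrictions and short products of low-weight operators are controlled, and a permutation $\pi\in G$ fixing pointwise a small "base" set should act trivially on logicals. Concretely, I would pick a qubit $q$ and a generator $g$ supported near $q$, and use connectivity of the check hypergraph: if the code decomposes into components, the logicals split accordingly, and phantomness forces essentially one component. I then aim to show that the action of $\pi$ on logicals is determined by the images of $O(1)$ qubits, up to stabilizer-equivalence.

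The mechanism is to compare a logical representative $L$ with $\pi(L)$ on the neighborhood where $\pi$ is trivial. Their difference is a logical operator. Near each local region it looks like a stabilizer, because of weight below $d$ and bounded check weight. A cleaning-lemma style propagation along the bounded-degree check graph should then show the difference is globally a stabilizer. The number of choices for the images of a constant-size base is $n^{O(1)}$, so the induced logical group has size at most $n^{O(1)}=2^{O(\log n)}$. Comparing with $2^{k^2-O(1)}$ gives $k=O(\sqrt{\log n})$, which contradicts the hypothesis for large $t$.

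The main obstacle is the propagation step. I must show that a permutation agreeing with the identity, or with a fixed map, on a bounded set induces a trivial logical action, using only $d>w$ and $w=O(1)$. Qubit degree is not assumed bounded, but it is bounded by $w$ times the number of generators touching a qubit. That may still be large, so the propagation may need a different argument. The first alternative I would try is a counting argument: each logical CNOT must map some weight-$\le w$ check to another, and I would bound the number of distinct actions of $G$ on the set of $O(n)$ low-weight generators modulo the stabilizer group. I would treat the $X$ and $Z$ parts separately, using Schreier–Sims style base-length bounds. If the polynomial bound fails there, I would fall back to establishing $|G|/|\text{kernel}|\le 2^{O(\log n)}$ via the orbit of a single distance-witnessing structure.
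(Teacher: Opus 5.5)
There is a genuine gap. Your final comparison is fine: an $n^{O(1)}$ bound on the induced logical group would indeed contradict $k=\omega(\sqrt{\log n})$. The problem is how you get that bound. Everything rests on the claim that some set $B$ of $O(1)$ qubits is a base for the logical action, i.e.\ every $\pi\in G$ fixing $B$ pointwise acts trivially on $S^{\perp_s}/S$. The cleaning-lemma propagation you propose cannot establish this.

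\begin{itemize}
\item If $\pi$ fixes $B$, all you learn is that $\pi(L)+L$ vanishes on $B$. But for $|B|<d$, every logical class has a representative vanishing on $B$. Likewise, ``looking locally like a stabilizer'' on regions of size $<d$ is a property of \emph{every} logical operator. So there is no local-to-global principle to invoke.
\item Propagating outward would require $\pi$ to be determined on the neighbourhood of $B$ by its values on $B$. This fails in general: in a connected graph of bounded degree, the pointwise stabilizer of a ball can still be exponentially large (trees, twinned vertices). Nothing in your argument forces its logical image to be trivial rather than, for example, a large $2$-subgroup of $\GLtwo{k}$.
\item The degree problem you flag is left unresolved. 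The hypothesis $w(S)\le W$ only gives some bounded-weight basis, and $\PAut(S)$ need not permute it, so $G$ does not act on your check graph. The invariant set of all weight-$\le W$ stabilizers, on the other hand, has no a priori degree bound.
\end{itemize}

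The paper supplies the missing ingredients. It works with the canonical, $\PAut(S)$-invariant set of \emph{atoms}: weight-$\le W$ stabilizers that are not sums of two nonzero stabilizers with disjoint supports. Using the sunflower lemma together with $d>W$, it shows that each qubit lies in at most $F_{\mathrm{stab}}(W)$ atoms. This is exactly where the distance hypothesis enters. It yields a bounded-degree incidence graph with a unique logically active component $\Gamma_*$, via the classification of $\GL(V)$-invariant subspaces of $V\oplus V^*$ and simplicity of $\GLtwo{k}$ for $k\ge3$.

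Then, rather than proving that point stabilizers act trivially, the paper proves only that they are $\Delta$-smooth. Filtering $H_v$ by pointwise stabilizers of balls embeds each layer into a product of symmetric groups of degree at most $\Delta$, so every prime dividing $|H_v|$ is at most $\Delta$. Consequently the large-prime part $|\GLtwo{k}|_{>\Delta}\ge 2^{k(k-1)/2-O_\Delta(k)}$ must divide the orbit size $|Hv|\le|V(\Gamma_*)|=O_W(n)$. This gives $\log n=\Omega_W(k^2)$ with no bound on $|G|$ or on base length. This orbit--stabilizer argument with a prime-support restriction is the idea your proposal is missing.
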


The second theorem constructs phantom qLDPC families with any prescribed constant distance at the maximal logical scaling $k=\Theta(\log n)$.
The third theorem applies already when $k=\omega(\sqrt{\log n})$ and implies that distance cannot grow without check weight also growing.
In particular, it makes the construction optimal in distance scaling.

\paragraph{Relation to prior work}
On the complexity side, our reduction is closely related in spirit to the classical result of Petrank and Roth that binary code equivalence is GI-hard \cite{petrank1997code}.
However, the decision problem in the quantum case is different: rather than asking whether two codes are permutation-equivalent, we ask whether the physical permutation group of a single stabilizer code realizes the full standard logical CNOT group.
The reduction shows that this global logical-action question remains GI-hard even when the code is CSS and the logical dimension is two.

Our qLDPC construction combines various ingredients in a phantom-compatible way.
Motivated by earlier phantom constructions \cite{koh2026entangling,he2026logical}, we realize the dual Hamming stabilizer space using the variable-copying degree-reduction gadget of \cite{hastings2021quantum,hastings2021fiber}, and then use fixed-block concatenation \cite{knill1996concatenated,koh2026entangling} to raise both sector distances.
To our knowledge, this gives the first growing-$k$ phantom family with both bounded check weight and bounded qubit degree, and the first such family at every prescribed constant distance.

\paragraph{Organization.}
\Cref{sec:setup} gives the formal definitions, notation, and elementary facts.
\Cref{sec:GI} proves GI-hardness.
\Cref{sec:constructions} constructs explicit phantom qLDPC families with maximal logical scaling and arbitrary constant distance.
\Cref{sec:obstructions} proves the distance no-go theorem first for CSS codes and then for general stabilizer codes.
\Cref{sec:conclusions} summarizes the implications of these results and discusses open problems.

\section{Preliminaries}\label{sec:setup}

Throughout, all vector spaces are over $\F_2$.
For a binary vector $v$, its support is $\supp(v):=\{i:v_i\neq0\}$, and its Hamming weight is $\wt(v):=|\supp(v)|$.
The standard bilinear form is \[ \langle x,y\rangle = \sum_{i=1}^n x_i y_i\pmod 2. \]
For a subspace $C\leq\F_2^n$, its orthogonal complement is
\[
  C^\perp=\{x\in\F_2^n:\langle x,c\rangle=0\text{ for all }c\in C\}.
\]
A coordinate permutation $\pi\in S_n$ acts on $\F_2^n$ by permuting coordinates.
Such permutations preserve Hamming weight and the bilinear form.

For a finite set $I$, write
\begin{equation}
  \label{eq:pauli-space}
  \cP(I):=\F_2^I\oplus\F_2^I
\end{equation}
for the binary Pauli space modulo phases, with symplectic form
\[
  \langle(x,z),(x',z')\rangle_s :=\langle x,z'\rangle+ \langle z,x'\rangle.
\]
For $p=(x,z)$, define its qubit support and weight by
\[
  \supp_q(p):=\supp(x)\cup\supp(z), \qquad \wt_q(p):=|\supp_q(p)|.
\]
For a subspace $S\leq\cP([n])$, its symplectic orthogonal complement is
\[
  S^{\perp_s}:=\{p\in\cP([n]):\langle p,s\rangle_s=0\text{ for every }s\in S\}.
\]

\begin{definition}[Phantom stabilizer code]
  \label{def:phantom-stabilizer}
  A phase-free binary stabilizer code is specified by an isotropic subspace $S\leq\cP([n])$.\footnote{For a bilinear form $B$ on a space $W$, a subspace $U$ is isotropic if $B(U,U)=0$ and Lagrangian if maximal isotropic.}
  It encodes $k=n-\dim S$ qubits and has the nondegenerate logical Pauli space
  \[
    L:=S^{\perp_s}/S.
  \]
  A physical coordinate permutation acts diagonally on the $X$ and $Z$ coordinates of $\cP([n])$: $\pi((x,z)) = (\pi(x), \pi(z))$.
  The permutation automorphism group of $S$ is \[ \PAut(S):=\{\pi\in S_n:\pi(S)=S\}. \]
  Fix a $k$-dimensional space $V$ and a symplectic identification $L\cong V\oplus V^*$, with restricted symplectic form
  \begin{equation}
    \label{eq:canonical-logical-symplectic-form}
    \omega((x,f),(y,g)) = g(x) + f(y).
  \end{equation}
  The standard logical CNOT subgroup is
  \[
    \rho(\GL(V))\leq\operatorname{Sp}(V\oplus V^*), \qquad \rho(F)(x,f):=(Fx,f\circ F^{-1}).
  \]
  We call $S$ a \emph{phantom stabilizer code}, or simply \emph{phantom}, if the image of $\PAut(S)$ on $L$ contains $\rho(\GL(V))$ for some such identification.
\end{definition}

When $S$ is CSS, \Cref{def:phantom-stabilizer} reduces to surjectivity on the logical $X$ quotient introduced below, and the action on the logical $Z$ quotient is contragredient.

\begin{definition}[Phantom CSS code]
  \label{def:phantom-css}
  A phase-free binary CSS code is specified by its $X$- and $Z$-type stabilizer spaces $V_X,V_Z\leq\F_2^n$ satisfying $V_X\subseteq V_Z^\perp$.
  It encodes $k=n-\dim V_X-\dim V_Z$ qubits and has logical quotients \[ Q_X:=V_Z^\perp/V_X, \qquad Q_Z:=V_X^\perp/V_Z. \]
  The permutation automorphism group of the CSS code is \[ \PAut(V_X,V_Z):=\{\pi\in S_n:\pi(V_X)=V_X,\ \pi(V_Z)=V_Z\}. \]
  Every $\pi\in \PAut(V_X,V_Z)$ induces an automorphism of $Q_X$, giving a homomorphism \[ A:\PAut(V_X,V_Z)\longrightarrow \GL(Q_X)\cong \GLtwo{k}. \]
  We call $A$ the \emph{induced quotient action}; and $(V_X,V_Z)$ a \emph{phantom CSS code}, or simply \emph{phantom}, if $A$ is surjective.
\end{definition}

The next lemma shows that surjectivity on $Q_X$ automatically gives surjectivity on $Q_Z$.

\begin{lemma}
  If $A$ is surjective on $Q_X$, then the induced action on $Q_Z$ is also surjective.
\end{lemma}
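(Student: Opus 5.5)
The plan is to show that $Q_X$ and $Q_Z$ are naturally dual to each other, in a way that every code automorphism respects. Surjectivity onto $\GL(Q_X)$ then transfers to $\GL(Q_Z)$ through the contragredient (inverse-transpose) map.

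\textbf{Step 1: the pairing.} Define
\[
  \beta:Q_X\times Q_Z\to\F_2,\qquad \beta([x],[z]):=\langle x,z\rangle,
\]
for $x\in V_Z^\perp$ and $z\in V_X^\perp$. We check that $\beta$ is well defined.
\begin{itemize}
  \item Changing $x$ by an element of $V_X$ does not change the value, since $z\in V_X^\perp$.
  \item Changing $z$ by an element of $V_Z$ does not change the value, since $x\in V_Z^\perp$.
\end{itemize}

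\textbf{Step 2: nondegeneracy.} Suppose $\langle x,z\rangle=0$ for all $z\in V_X^\perp$. Then $x\in (V_X^\perp)^\perp=V_X$, so $[x]=0$. The symmetric argument uses $(V_Z^\perp)^\perp=V_Z$. Hence $\beta$ identifies $Q_Z$ with the dual space $Q_X^*$. As a consistency check, both quotients have dimension $k=n-\dim V_X-\dim V_Z$.

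\textbf{Step 3: equivariance.} Every $\pi\in\PAut(V_X,V_Z)$ preserves the standard bilinear form, and it maps $V_Z^\perp$ and $V_X^\perp$ to themselves. Therefore
\[
  \beta(\pi x,\pi z)=\beta(x,z).
\]
Write $B(\pi)$ for the action of $\pi$ on $Q_Z$. Under the identification $Q_Z\cong Q_X^*$, this equivariance says
\[
  B(\pi)=\bigl(A(\pi)^{-1}\bigr)^{*},
\]
so $B(\pi)$ is the contragredient of $A(\pi)$.

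\textbf{Step 4: surjectivity.} The map $F\mapsto (F^{-1})^*$ is a group isomorphism $\GL(Q_X)\to\GL(Q_X^*)$. Combined with the isomorphism $\GL(Q_X^*)\cong\GL(Q_Z)$ coming from $\beta$, it follows that
\[
  \operatorname{im}B=\text{image of }\operatorname{im}A\text{ under this isomorphism}.
\]
If $A$ is surjective, then $\operatorname{im}A=\GL(Q_X)$, and hence $B$ is surjective onto $\GL(Q_Z)$.

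The only point that needs care is nondegeneracy in Step 2, which relies on the double-orthogonal identity $(U^\perp)^\perp=U$ over $\F_2$. This identity holds because the standard form is nondegenerate on $\F_2^n$ and dimensions are finite, even though $U\cap U^\perp$ may be nonzero. Everything else is routine.
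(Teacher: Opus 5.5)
Your proof is correct and follows the same route as the paper. Both arguments use the standard form to induce a perfect pairing $Q_X\times Q_Z\to\F_2$ that every code automorphism preserves; the $Q_Z$-action is therefore the contragredient of the $Q_X$-action, and $F\mapsto (F^{-1})^*$ is a group isomorphism, so surjectivity transfers. You spell out the well-definedness and the double-orthogonal nondegeneracy checks, which the paper compresses into a single remark about removing the radicals.
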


\begin{proof}
  The standard bilinear form induces a perfect pairing \[ Q_X\times Q_Z\longrightarrow\F_2, \qquad (x+V_X,z+V_Z)\mapsto \langle x,z\rangle. \]
  Indeed, the annihilator of $V_X$ is $V_X^\perp$, and quotienting by $V_Z$ and $V_X$ removes precisely the radicals.\footnote{For a bilinear pairing $B:E\times F\to\mathbb F$, the left radical is $\{e:B(e,F)=0\}$. The right radical is similarly defined.}
  Thus $Q_Z$ identifies with the dual vector space $Q_X^*$.
  A coordinate permutation preserves the pairing, so the action on $Q_Z$ is the contragredient of the action on $Q_X$.
  The contragredient map $g\mapsto g^{-\top}$ is an automorphism of $\GLtwo{k}$.\footnote{Here $g^{-\top}:=(g^{-1})^\top=(g^\top)^{-1}$ denotes the inverse transpose of $g$.}
  Therefore surjectivity on $Q_X$ implies surjectivity on $Q_Z$.
\end{proof}

\paragraph{Check sparsity.}
For a stabilizer generating basis $B\subseteq S$, sparsity is measured by the maximum row and column weights of its support-incidence matrix.
Explicitly, define \[ w(B):=\max_{b\in B}|\supp_q(b)|, \qquad \Delta(B):=\max_{i\in[n]}\bigl|\{b\in B:i\in\supp_q(b)\}\bigr|, \]
where $\supp_q$ denotes the qubit support defined above.
Thus $w(B)$ is the maximum number of qubits in a stabilizer check, while $\Delta(B)$ is the maximum number of checks incident on a qubit.
A family is qLDPC if it admits stabilizer generating sets with $w=O(1)$ and $\Delta=O(1)$.
We also use the intrinsic check weight \[ w(S):=\min_{B: \text{ basis of } S}w(B). \]
For a CSS pair $(V_X,V_Z)$ we similarly set \[ w(V_X,V_Z):=\max\{w(V_X),w(V_Z)\}, \qquad w(V):=\min_{B: \text{ basis of } V}\max_{b\in B}|\supp(b)|. \]

\paragraph{Code Distance.}
For a general stabilizer code, the distance is \[ d(S):=\min\{\wt_q(p):p\in S^{\perp_s}\setminus S\}. \]
For CSS codes, this becomes the minimum of the two sector distances
\[
  d_X:=\min\{\wt(x):x\in V_Z^\perp\setminus V_X\}, \qquad d_Z:=\min\{\wt(z):z\in V_X^\perp\setminus V_Z\}, \qquad d:=\min\{d_X,d_Z\}.
\]

\paragraph{Phase convention}
We suppress scalar Pauli phases in \Cref{def:phantom-stabilizer,def:phantom-css} because, after fixing a reference signed lift $\widehat S_0$ of a binary stabilizer space $S$ (the canonical all-$+$ lift in the CSS case) every other signed lift $\widehat S$ satisfies \[ \widehat S_0=P\widehat S P^\dagger \] for some physical Pauli string $P$ \cite{gottesman1997stabilizer,dehaene2003clifford}.
Thus the phase-free binary description entails no loss for the code parameters or logical actions considered below.

Phases matter only if one requires the bare coordinate permutation to preserve the signed stabilizer group exactly.
In such case, $\PAut(\widehat S)$ can be a strict subgroup of $\PAut(S)$.
All results below concerning $n$, $k$, distance, generator weight, qLDPC sparsity, constructions, GI-hardness, and no-go bounds remain valid under the Pauli-frame convention.
Under the stricter bare-permutation convention, the constructions remain valid for the canonical all-$+$ lift, while the no-go results continue to hold because every exact phase-preserving symmetry is also an element of the corresponding binary permutation automorphism group.

\section{GI-hardness of recognizing phantom stabilizer codes}\label{sec:GI}

It is enough to prove hardness on CSS inputs, for which the phantom condition of \Cref{def:phantom-css} is the following quotient-surjectivity problem.

\begin{problem}[Surjective quotient action]
  An instance consists of generator matrices for binary linear codes $V_X,V_Z\leq\F_2^n$ satisfying $V_X\subseteq V_Z^\perp$.
  Decide whether \[ A(\PAut(V_X,V_Z))=\GLtwo{k}, \qquad k=n-\dim V_X-\dim V_Z. \]
  Let $\SURJ_k$ denote the restriction to instances with quotient dimension $k$.
\end{problem}

The main result of this section is the following.

\gihardness*

The proof of \Cref{thm:GI-hard} breaks into \Cref{sec:connectivity-gadget,sec:graph-code,sec:GI-reduction-map}.
In \Cref{sec:logical-relabelling}, we prove a similar theorem showing that GI-hardness persists even for the weaker requirement that physical permutations realize every logical SWAP.

\subsection{A connectivity gadget}\label{sec:connectivity-gadget}

For a graph $X$, write $X\vee K_2$ for the join of $X$ with a two-vertex clique: add two new adjacent vertices and connect each of them to every vertex of $X$.

\begin{lemma}
  \label{lem:join}
  For non-empty simple graphs $X,Y$, \[ X\cong Y\quad\Longleftrightarrow\quad X\vee K_2\cong Y\vee K_2. \]
  Moreover, $X\vee K_2$ is connected and has minimum degree at least $2$.
\end{lemma}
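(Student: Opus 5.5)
The forward direction is immediate: an isomorphism $\phi:X\to Y$ extends to $X\vee K_2\to Y\vee K_2$ by sending the two added vertices $a,b$ of $X\vee K_2$ to the two added vertices $a',b'$ of $Y\vee K_2$. Adjacency is preserved, since $a,b$ are adjacent to each other and to every vertex of $X$, exactly as $a',b'$ are in $Y\vee K_2$.

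For the converse I will use complements. The complement of $X\vee K_2$ is the disjoint union of $\overline{X}$ with two isolated vertices $a,b$: the edge $ab$ and all edges from $\{a,b\}$ to $X$ are removed. Likewise, $\overline{Y\vee K_2}=\overline{Y}\sqcup 2K_1$. An isomorphism $X\vee K_2\cong Y\vee K_2$ is also an isomorphism of the complements, so $\overline{X}\sqcup 2K_1\cong\overline{Y}\sqcup 2K_1$.

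I then cancel the two isolated vertices. Isomorphism of finite graphs is equivalent to equality of the multisets of isomorphism types of connected components. Here the two $K_1$ components are common to both sides. Removing two copies of $K_1$ from each multiset leaves the component multisets of $\overline{X}$ and $\overline{Y}$, which must therefore agree. Hence $\overline{X}\cong\overline{Y}$, and taking complements again gives $X\cong Y$. This cancellation step is the only point needing care. It is needed because a naive argument that maps added vertices to added vertices fails when $X$ itself has universal vertices. The component-multiset argument avoids that case analysis.

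For the structural claims, $X\vee K_2$ is connected because every vertex is adjacent to $a$, which is itself adjacent to $b$. For the minimum degree, each vertex of $X$ is adjacent to both $a$ and $b$, so it has degree at least $2$. Vertex $a$ is adjacent to $b$ and to every vertex of the non-empty graph $X$, so $\deg a\ge 2$, and the same holds for $b$.
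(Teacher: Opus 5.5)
Your proof is correct and follows the paper's argument essentially verbatim: you pass to complements $\overline{X\vee K_2}=\overline{X}\sqcup 2K_1$, cancel the two isolated vertices using uniqueness of the connected-component decomposition, and complement back. Your explicit check that the two added vertices have degree at least $2$, which uses non-emptiness of $X$, is a detail the paper leaves implicit.
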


\begin{proof}
  The forward implication is immediate.
  For the converse, observe \[ \overline{X\vee K_2}=\bar X\sqcup 2K_1. \]
  Thus an isomorphism $X\vee K_2\cong Y\vee K_2$ yields
  \[
    \bar X\sqcup 2K_1\cong \bar Y\sqcup 2K_1.
  \]
  By uniqueness of the decomposition of a finite graph into connected components, the multisets of nontrivial connected components of $\bar X$ and $\bar Y$ agree, and the number of isolated vertices in $\bar X$ and $\bar Y$ also agrees because both sides have had exactly two isolated vertices added.
  Hence $\bar X\cong\bar Y$, and therefore $X\cong Y$.

  The two new vertices are adjacent to one another and to all old vertices.
  Every old vertex is adjacent to both new vertices, so every vertex has degree at least $2$, and the graph is connected.
\end{proof}

Given a GI instance $(X,Y)$, if the graphs have different numbers of vertices we may map it to a fixed no-instance.
Hence assume they have the same number of vertices and set \[ X'=X\vee K_2, \qquad Y'=Y\vee K_2. \]
Now form
\[
  Z=X'\sqcup Y'\sqcup Y'.
\]
Let the three connected components be $H_1=X'$, $H_2=Y'$, and $H_3=Y'$.

\begin{lemma}
  \label{lem:component-group}
  Let $P(Z)\leq S_3$ be the permutation group induced by $\Aut(Z)$ on the set of connected components $\{H_1,H_2,H_3\}$.
  Then
  \[
    P(Z)=
    \begin{cases}
      S_3,                         & X\cong Y,     \\
      \langle(23)\rangle\cong C_2, & X\not\cong Y.
    \end{cases}
  \]
\end{lemma}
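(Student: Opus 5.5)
Since each $H_i$ is connected (\Cref{lem:join}) and $Z$ is their disjoint union, every automorphism of $Z$ sends connected components to connected components. It therefore induces a permutation of $\{H_1,H_2,H_3\}$, so $P(Z)$ is a well-defined subgroup of $S_3$. The key observation is that a permutation $\sigma\in S_3$ lies in $P(Z)$ exactly when $H_i\cong H_{\sigma(i)}$ for every $i$.

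For the ``only if'' direction, suppose $\phi\in\Aut(Z)$ induces $\sigma$. Then $\phi$ restricts to an isomorphism $H_i\to H_{\sigma(i)}$: it is a bijection between the vertex sets that preserves adjacency in both directions, since both components are induced subgraphs of $Z$. Conversely, given isomorphisms $\psi_i:H_i\to H_{\sigma(i)}$ for all $i$, I glue them into a single map on $V(Z)$. This map is a bijection, because the $H_{\sigma(i)}$ partition $V(Z)$. It preserves edges and non-edges within each component by construction. There are no edges between different components on either side, so it preserves those non-edges as well. Hence it is an automorphism of $Z$ inducing $\sigma$.

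It remains to compute which $\sigma$ satisfy $H_i\cong H_{\sigma(i)}$ for all $i$. Since $H_2=H_3=Y'$, the transposition $(23)$ always qualifies, using the identity maps. If $X\cong Y$, then \Cref{lem:join} gives $X'\cong Y'$, so all three components are isomorphic and $P(Z)=S_3$. If $X\not\cong Y$, then \Cref{lem:join} gives $X'\not\cong Y'$. Any $\sigma$ moving $1$ would then require $H_1\cong H_2$ or $H_1\cong H_3$, which is impossible. So $\sigma$ fixes $1$, and $P(Z)=\{\mathrm{id},(23)\}\cong C_2$.

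The only step needing real care is the ``only if'' direction, that is, that automorphisms genuinely permute components. This rests on connectivity of $X'$ and $Y'$, which is exactly why the join gadget was introduced.
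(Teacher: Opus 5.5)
Your proof is correct and follows the paper's route. Automorphisms permute the connected components $H_i$, which are connected by \Cref{lem:join}, and componentwise isomorphisms glue to realize exactly those permutations that respect isomorphism type; \Cref{lem:join} then turns $X\cong Y$ versus $X\not\cong Y$ into $X'\cong Y'$ versus $X'\not\cong Y'$, and you simply spell out the restriction and gluing steps the paper leaves implicit.
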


\begin{proof}
  If $X\cong Y$, then $X'\cong Y'$ by \Cref{lem:join}; all three connected components are isomorphic, and any permutation of the components can be realized by choosing componentwise isomorphisms.
  Hence $P(Z)=S_3$.

  If $X\not\cong Y$, then $X'\not\cong Y'$.
  An automorphism of a graph maps connected components to isomorphic connected components, so $H_1$ must be fixed, while the two copies $H_2,H_3$ may be interchanged.
  Hence the induced group is exactly $\langle(23)\rangle$.
\end{proof}

\subsection{Encoding a graph as a binary code}\label{sec:graph-code}

Let $Z$ have $N$ vertices and $M$ edges, and let \[ D_Z\in \F_2^{M\times N} \] be the unoriented edge--vertex incidence matrix: the row indexed by $e=\{u,v\}$ has ones in columns $u,v$.
Following \cite{petrank1997code}, define \[ M_Z=[I_M\mid I_M\mid I_M\mid D_Z] \] and \[ C(Z):=\operatorname{rowspan}(M_Z)\leq \F_2^{3M+N}. \]
The first $3M$ coordinates will be called \emph{private coordinates}; each edge has one private coordinate in each of the three identity blocks.
The final $N$ coordinates are the \emph{vertex coordinates}.

For an edge $e$, let $r_e$ denote the corresponding row of $M_Z$.

\begin{lemma}
  \label{lem:weight5}
  The weight-five codewords of $C(Z)$ are exactly the rows $r_e$, one for each edge $e\in E(Z)$.
\end{lemma}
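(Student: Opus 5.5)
The plan is to write an arbitrary codeword as $c=\sum_{e\in F} r_e$ for a subset $F\subseteq E(Z)$ and compute its weight directly. Because the first $3M$ columns of $M_Z$ form three copies of $I_M$, the rows $r_e$ are linearly independent. Hence the representation by $F$ is unique, and the private part of $c$ is the indicator of $F$ repeated three times, contributing exactly $3|F|$ to the weight. The vertex part equals $D_Z^\top\mathbf 1_F$, whose support is the set of vertices of odd degree in the edge subgraph $(V(Z),F)$. Writing $\mathrm{odd}(F)$ for that set, we get
\[
  \wt(c)=3|F|+|\mathrm{odd}(F)|.
\]

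Next I will do the case analysis for $\wt(c)=5$. If $|F|=0$ then $c=0$. If $|F|=1$, say $F=\{e\}$, then both endpoints of $e$ have odd degree, since $Z$ is simple and has no loops. So $\wt(c)=3+2=5$ and $c=r_e$. If $|F|\geq2$, then $3|F|\geq6>5$, so no weight-five codeword arises. Thus the weight-five codewords are exactly the $M$ rows $r_e$. These rows are distinct because their private supports differ.

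The only point needing care is the uniqueness of the representation, which is what lets the private weight be read off as $3|F|$. This follows immediately from the identity blocks, so I expect no real obstacle. The simplicity of $Z$ is used only to guarantee that each single edge has two distinct endpoints.
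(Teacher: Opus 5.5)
Your proof is correct and follows essentially the same route as the paper's. In both, a codeword is written as a sum of $t$ edge rows, the three identity blocks contribute exactly $3t$ with no cancellation, and so a weight-five codeword forces $t=1$. Your exact formula $\wt(c)=3|F|+|\mathrm{odd}(F)|$ and your remark about uniqueness of the representation are harmless refinements that the argument does not actually need.
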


\begin{proof}
  Each row $r_e$ has one nonzero coordinate in each identity block and two nonzero vertex coordinates, hence $\wt(r_e)=5$.

  Every codeword of $C(Z)$ is the sum of a set $S$ of rows.
  If $|S|=t$, then in each identity block the sum contains exactly the $t$ private coordinates indexed by $S$; since those coordinates are distinct, there is no cancellation.
  Therefore the contribution of the three identity blocks to the total weight is exactly $3t$.
  If $t\geq 2$, this is at least $6$, regardless of the vertex part.
  Hence a weight-five codeword must have $t=1$ and is one of the rows.
\end{proof}

\begin{lemma}
  \label{lem:vertexintrinsic}
  Assume every vertex of $Z$ has degree at least $2$, i.e., $\delta(Z) \ge 2$.
  Then the partition of the coordinates of $C(Z)$ into private coordinates and vertex coordinates is intrinsic to the code.
\end{lemma}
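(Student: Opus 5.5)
We will show that the set of vertex coordinates can be described using only the code $C(Z)$, i.e.\ without reference to a generator matrix. Then every $\pi\in\PAut(C(Z))$ maps vertex coordinates to vertex coordinates and private coordinates to private coordinates. The tool is \Cref{lem:weight5}: the weight-five codewords are exactly the rows $r_e$, so the family $\{\supp(r_e)\}_{e\in E(Z)}$ is itself determined by the code. For each coordinate $i$, we count how many weight-five codewords contain $i$ in their support, and use this count to separate the two classes of coordinates.

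First, take a private coordinate. It is indexed by an edge $e$ and one of the three identity blocks, so it lies in the support of exactly one row, namely $r_e$. Its count is therefore exactly $1$. Next, take a vertex coordinate $v$. It lies in $\supp(r_e)$ exactly when $v\in e$, so its count equals $\deg_Z(v)$. The hypothesis $\delta(Z)\ge 2$ makes this count at least $2$.

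This gives a characterization: the private coordinates are exactly those lying in exactly one weight-five codeword, and the vertex coordinates are exactly those lying in at least two. The characterization refers only to weights and supports of codewords. A coordinate permutation $\pi$ with $\pi(C(Z))=C(Z)$ preserves weights, so it permutes the set of weight-five codewords and preserves each coordinate's count. Hence $\pi$ preserves the partition, which is what ``intrinsic'' means here. If the paper intends intrinsic with respect to code equivalence, the same argument shows that any permutation mapping $C(Z)$ onto $C(Z')$, with $\delta(Z')\ge2$, carries one partition to the other.

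The only real obstacle is that the characterization is exhaustive, and this is already handled by \Cref{lem:weight5}: no sum of two or more rows can have weight five, so no extra codewords can distort the counts. The degree condition is exactly what excludes degree-one vertices, whose coordinates would have count $1$ and be indistinguishable from private coordinates. No further case analysis is needed.
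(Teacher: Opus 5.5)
Your proposal is correct and matches the paper's proof: both use \Cref{lem:weight5} to make the set of weight-five rows $r_e$ intrinsic to the code. Both then separate private coordinates, which lie in exactly one weight-five codeword, from vertex coordinates, which lie in $\deg_Z(v)\geq 2$ of them.
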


\begin{proof}
  By \Cref{lem:weight5}, the set of weight-five codewords is intrinsic.
  A private coordinate belongs to exactly one such codeword: the edge row associated with that private coordinate.
  A vertex coordinate $v$ belongs to exactly $\deg_Z(v)\geq 2$ weight-five codewords, namely the rows corresponding to incident edges.
  Thus one can distinguish the two types of coordinates solely by counting membership in weight-five codewords.
\end{proof}

\begin{proposition}
  \label{prop:autgraphcode}
  Assume $\delta(Z)\geq2$.
  Restriction to the vertex coordinates defines a surjective homomorphism \[ \nu:\PAut(C(Z))\twoheadrightarrow\Aut(Z). \]
  Its kernel acts trivially on the vertex coordinates and, independently for each edge, permutes the three private coordinates attached to that edge.
  In particular, \[ \ker\nu\cong (S_3)^M. \]
\end{proposition}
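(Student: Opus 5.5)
We prove the proposition by first showing that every $\pi\in\PAut(C(Z))$ respects the edge structure, and then reading off the induced map on vertices.

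\emph{Step 1: $\pi$ permutes edge rows and vertex coordinates.} Since $\pi$ preserves $C(Z)$ and Hamming weight, it permutes the weight-five codewords. By \Cref{lem:weight5} these are exactly the rows $r_e$, so there is a permutation $\sigma\in\mathrm{Sym}(E(Z))$ with $\pi(r_e)=r_{\sigma(e)}$. By \Cref{lem:vertexintrinsic}, $\pi$ maps vertex coordinates to vertex coordinates and private coordinates to private coordinates. Let $\nu(\pi)$ denote the resulting permutation of $V(Z)$.

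\emph{Step 2: $\nu(\pi)\in\Aut(Z)$ and $\nu$ is a homomorphism.} The vertex part of $r_e$ is $\{u,v\}$ when $e=\{u,v\}$. Applying $\pi$ gives $\pi(r_e)=r_{\sigma(e)}$, and comparing vertex parts yields $\{\nu(\pi)(u),\nu(\pi)(v)\}=\sigma(e)\in E(Z)$. Hence $\nu(\pi)$ maps edges to edges. As $\nu(\pi)$ is a bijection on the finite vertex set and $\sigma$ is a bijection of $E(Z)$, it is a graph automorphism. Since $\nu$ is simply restriction to an invariant set of coordinates, it is a homomorphism.

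\emph{Step 3: surjectivity.} Given $\alpha\in\Aut(Z)$, we construct an explicit lift $\tilde\alpha$. It acts as $\alpha$ on the vertex coordinates. For each $j=1,2,3$, it sends the private coordinate of $e$ in block $j$ to the private coordinate of $\alpha(e)$ in block $j$. This permutation maps each generator $r_e$ to $r_{\alpha(e)}$, so $\tilde\alpha$ preserves $C(Z)$, and by construction $\nu(\tilde\alpha)=\alpha$.

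\emph{Step 4: the kernel.} Let $\pi\in\ker\nu$, so $\pi$ fixes every vertex coordinate. Then $\sigma(e)$ has the same endpoints as $e$, and since $Z$ is simple, $\sigma(e)=e$. Thus $\pi(r_e)=r_e$ for every edge. Each private coordinate lies in a unique weight-five codeword, so $\pi$ must map the three private coordinates of $e$ among themselves.

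Conversely, take any choice of permutations of the three private coordinates for each edge, and let it fix all vertex coordinates. Such a permutation fixes every row $r_e$ setwise, so it lies in $\PAut(C(Z))$ and in $\ker\nu$. The choices for distinct edges are independent and commute, giving $\ker\nu\cong(S_3)^M$.

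The main care needed is in Step 1, where we need $\pi$ to map vertex coordinates to vertex coordinates so that $\nu$ is well defined. This is exactly the point at which the hypothesis $\delta(Z)\ge2$ enters, through \Cref{lem:vertexintrinsic}. Simplicity of $Z$ is used in Step 4.
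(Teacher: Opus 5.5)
Your proof is correct and follows essentially the same route as the paper. Both arguments use \Cref{lem:weight5} and \Cref{lem:vertexintrinsic} to obtain induced permutations of edges and vertices, compare the vertex supports of the rows $r_e$ to land in $\Aut(Z)$, lift an automorphism by permuting each private block by the induced edge permutation, and read off the kernel from the per-edge private triples. Your explicit converse for the kernel, and your note that simplicity of $Z$ is what forces $\sigma(e)=e$, spell out points the paper leaves implicit.
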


\begin{proof}
  Let $\pi\in\PAut(C(Z))$.
  By \Cref{lem:weight5}, $\pi$ permutes the set $\{r_e:e\in E(Z)\}$; by \Cref{lem:vertexintrinsic}, it preserves the set of vertex coordinates.
  Hence $\pi$ induces a permutation $\sigma$ of the vertices and a permutation of the edges.

  The support of $r_e$ on vertex coordinates is precisely the two endpoints of $e$.
  Since coordinate permutations preserve incidence between codewords and coordinates, if $\pi(r_e)=r_{e'}$, then the two vertex coordinates in the support of $r_e$ are carried to the two vertex coordinates in the support of $r_{e'}$.
  Therefore $\sigma$ preserves adjacency and lies in $\Aut(Z)$.
  This defines the homomorphism $\nu$.

  Conversely, let $\sigma\in\Aut(Z)$.
  It induces a permutation $\tau$ of the edge set.
  Permute the vertex coordinates by $\sigma$, and in each of the three identity blocks permute the private coordinates by the same edge permutation $\tau$.
  Then $r_e$ is sent to $r_{\tau(e)}$, so the row space is preserved.
  Thus $\nu$ is surjective.

  Finally, if $\nu(\pi)=1$, then all vertex coordinates are fixed.
  Since the vertex support of $r_e$ is the unordered endpoint pair of $e$, each row $r_e$ must be fixed as a vector.
  The only freedom is therefore to permute its three private support coordinates among themselves.
  These choices are independent across edges because the private supports of distinct rows are disjoint.
  Hence $\ker\nu\cong(S_3)^M$.
\end{proof}

\subsection{The reduction map}\label{sec:GI-reduction-map}

For $i=1,2,3$, let $s_i\in\F_2^{3M+N}$ be the indicator vector of the vertex coordinates belonging to component $H_i$, with all private coordinates equal to zero.
Define \[ L_0:=\Span\{s_1+s_2,\ s_1+s_3\}. \]

\begin{lemma}
  \label{lem:directsum}
  We have \[ \Span\{s_1,s_2,s_3\}\cap C(Z)=\{0\}. \]
  In particular $L_0\cap C(Z)=\{0\}$ and $\dim L_0=2$.
\end{lemma}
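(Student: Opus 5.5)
The plan is to compare supports on the private coordinates. Every vector in $\Span\{s_1,s_2,s_3\}$ is zero on all $3M$ private coordinates. Every codeword of $C(Z)$ is the sum $\sum_{e\in S} r_e$ of the rows indexed by some edge set $S\subseteq E(Z)$. The argument will then follow the proof of \Cref{lem:weight5}.

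First, I note that in the first identity block this sum has ones exactly at the private coordinates indexed by $S$. Distinct rows have disjoint private supports, so no cancellation occurs there. Hence a codeword vanishing on the private coordinates must come from $S=\emptyset$, and so it is the zero vector. This shows that $\Span\{s_1,s_2,s_3\}\cap C(Z)=\{0\}$.

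The consequence $L_0\cap C(Z)=\{0\}$ follows at once, since $L_0\subseteq\Span\{s_1,s_2,s_3\}$.

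For $\dim L_0=2$, I will show that $s_1+s_2$ and $s_1+s_3$ are linearly independent. The vectors $s_1,s_2,s_3$ have pairwise disjoint supports, because the components $H_1,H_2,H_3$ partition the vertex set. Each support is nonempty, since each $H_i$ has at least the two join vertices. So $s_1,s_2,s_3$ are linearly independent. Both $s_1+s_2$ and $s_1+s_3$ are nonzero, and their sum $s_2+s_3$ is also nonzero. No nontrivial combination of them vanishes, so they span a $2$-dimensional space.

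No step here is a real obstacle. The only point requiring care is the no-cancellation observation on the private coordinates, and \Cref{lem:weight5} has already made it.
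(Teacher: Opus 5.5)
Your proposal is correct and follows the paper's argument: every nonzero codeword of $C(Z)$ has nonzero private part while $\Span\{s_1,s_2,s_3\}$ vanishes there, and $\dim L_0=2$ follows because $s_1,s_2,s_3$ have disjoint nonempty supports. Your remark that each $H_i$ contains the two join vertices, which makes the supports nonempty, is a small extra detail that the paper leaves implicit.
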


\begin{proof}
  Every nonzero codeword of $C(Z)$ is the sum of a nonempty set of edge rows.
  In each of the three identity blocks, such a sum has a nonzero private coordinate for every selected edge.
  Thus every nonzero codeword of $C(Z)$ has nonzero private part.
  By contrast, every vector in $\Span\{s_1,s_2,s_3\}$ has zero private part.
  Hence the intersection is zero.
  The vectors $s_1,s_2,s_3$ have disjoint nonempty supports and are linearly independent, so the two displayed pairwise sums are independent.
\end{proof}

Set \[ W:=C(Z)\oplus L_0, \qquad V_X:=C(Z), \qquad V_Z:=W^\perp. \]
Then $V_X\subseteq W=V_Z^\perp$, and by \Cref{lem:directsum}
\[
  V_Z^\perp/V_X=W/C(Z)\cong L_0\cong\F_2^2.
\]
Hence the quotient dimension is exactly $k=2$.

\begin{lemma}
  \label{lem:PAut-code}
  For the pair $(V_X,V_Z)$ constructed above, \[ \PAut(V_X,V_Z)=\PAut(C(Z)). \]
\end{lemma}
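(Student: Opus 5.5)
We prove the two inclusions separately. Since $V_X=C(Z)$ and $V_Z=W^\perp$, and coordinate permutations preserve the standard bilinear form, a permutation preserves $V_Z$ if and only if it preserves $V_Z^\perp=W$. Hence
\[
\PAut(V_X,V_Z)=\PAut(C(Z))\cap\PAut(W),
\]
and the inclusion $\PAut(V_X,V_Z)\subseteq\PAut(C(Z))$ is immediate. It remains to show that every $\pi\in\PAut(C(Z))$ also preserves $W=C(Z)\oplus L_0$.

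Because $\pi$ already preserves $C(Z)$, it suffices to show that $\pi(L_0)\subseteq W$. For this we use the structure of $\PAut(C(Z))$ from \Cref{prop:autgraphcode}; its hypothesis $\delta(Z)\ge2$ holds because $Z$ is a disjoint union of graphs of the form $X\vee K_2$, each of minimum degree at least $2$ by \Cref{lem:join}. By \Cref{lem:vertexintrinsic}, $\pi$ maps vertex coordinates to vertex coordinates. By \Cref{prop:autgraphcode}, its restriction to them is some $\sigma=\nu(\pi)\in\Aut(Z)$.

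The graph automorphism $\sigma$ maps connected components to connected components, so there is a permutation $p\in P(Z)\le S_3$ with $\sigma(H_i)=H_{p(i)}$. The vector $s_i$ is supported only on vertex coordinates, namely those of $H_i$, so $\pi(s_i)=s_{p(i)}$. Therefore $\pi$ permutes $\{s_1,s_2,s_3\}$. It then preserves the even-weight subspace of their span, which is exactly
\[
\Span\{s_1+s_2,\ s_1+s_3\}=L_0.
\]
So $\pi(L_0)=L_0\subseteq W$, and hence $\pi(W)=W$.

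There is no serious obstacle. The one point needing care is to justify that $\pi$ sends vertex coordinates to vertex coordinates in a way that respects components. Lemma~\ref{lem:vertexintrinsic} handles the first part, and the fact that $\sigma$ is a graph automorphism, so it maps components to components, handles the second. Private coordinates need no attention, because each $s_i$ vanishes on them.
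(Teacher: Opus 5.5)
Your proof is correct and follows the paper's argument essentially step for step. You reduce to preservation of $W$ via $V_Z^\perp=W$, then use \Cref{prop:autgraphcode} to see that every $\pi\in\PAut(C(Z))$ permutes $s_1,s_2,s_3$ and hence fixes the even-weight subspace $L_0$. Your explicit check that the hypothesis $\delta(Z)\geq 2$ holds, via \Cref{lem:join}, is a small point the paper leaves implicit.
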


\begin{proof}
  By definition, $\pi\in \PAut(V_X,V_Z)$ if and only if it preserves $V_X=C(Z)$ and $V_Z$.
  Since coordinate permutations preserve the bilinear form, \[ \pi(V_Z)=V_Z\iff \pi(V_Z^\perp)=V_Z^\perp\iff \pi(W)=W. \]
  Thus $\PAut(V_X,V_Z)$ is the subgroup of $\PAut(C(Z))$ preserving $W$.

  Let $\pi\in\PAut(C(Z))$.
  By \Cref{prop:autgraphcode}, its action on vertex coordinates is an automorphism of $Z$, hence permutes the three connected components according to some $\tau\in P(Z)\leq S_3$.
  Therefore \[ \pi(s_i)=s_{\tau(i)}. \]
  The space $L_0$ is the even-weight subspace of the three-dimensional space $\Span\{s_1,s_2,s_3\}$ with respect to the coefficient vector on the basis $(s_1,s_2,s_3)$, and is invariant under all permutations of the three basis vectors.
  Hence $\pi(L_0)=L_0$.
  Since $\pi(C(Z))=C(Z)$, it follows that $\pi(W)=W$.
  Thus every permutation automorphism lies in $\PAut(V_X,V_Z)$, proving equality.
\end{proof}

The quotient $W/C(Z)$ has precisely three nonzero elements, \[ q_{12}=(s_1+s_2)+C(Z),\qquad q_{13}=(s_1+s_3)+C(Z),\qquad q_{23}=(s_2+s_3)+C(Z). \]
A component permutation $\tau\in S_3$ acts by
\[
  q_{ij}\mapsto q_{\tau(i)\tau(j)}.
\]

\begin{lemma}
  \label{lem:S3GL2}
  The permutation action of $S_3$ on the three unordered pairs $\{12,13,23\}$ identifies $S_3$ with $\GLtwo{2}$ acting on the three nonzero vectors of $\F_2^2$.
\end{lemma}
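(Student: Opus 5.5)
The plan is to make the identification explicit and then compare orders. Let $\F_2^2$ have the three nonzero vectors $e_1,e_2,e_1+e_2$. Every $g\in\GLtwo{2}$ is a bijection of $\F_2^2$ fixing $0$. So it permutes the set $\{e_1,e_2,e_1+e_2\}$ of nonzero vectors, and this gives a homomorphism $\phi:\GLtwo{2}\to\operatorname{Sym}(\F_2^2\setminus\{0\})\cong S_3$.

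First I will show that $\phi$ is injective. If $g$ fixes every nonzero vector, it fixes the basis $e_1,e_2$, so $g=I$.

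Next I will compare orders. We have $|\GLtwo{2}|=(4-1)(4-2)=6=|S_3|$, so $\phi$ is an isomorphism. Surjectivity can also be seen directly: any permutation of the three nonzero vectors sends $(e_1,e_2)$ to an ordered pair of distinct nonzero vectors, which is again a basis. The linear map with those images automatically sends $e_1+e_2$ to the sum of the images, which is the remaining nonzero vector.

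Then I will transport this to the pair action. The three unordered pairs $\{12\},\{13\},\{23\}$ correspond to the three nonzero vectors of $\Span\{s_1+s_2,s_1+s_3\}\cong\F_2^2$, via $q_{12},q_{13},q_{23}$. The action of $S_3$ on the pairs is faithful, since only the identity fixes every pair. Concretely, the complement map $\{ij\}\mapsto$ the remaining index is an $S_3$-equivariant bijection between pairs and points, so the pair action is just the natural action of $S_3$. Therefore $S_3\to\operatorname{Sym}(\{12,13,23\})$ is an isomorphism. Composing with the bijection $q_{ij}\leftrightarrow$ nonzero vectors and with $\phi^{-1}$ identifies $S_3$ with $\GLtwo{2}$.

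It remains to check that the identification is compatible with linearity. A permutation of the three nonzero vectors of $\F_2^2$ preserves the relation $q_{12}+q_{13}=q_{23}$, which is the only additive relation among them. Hence the induced map on $W/C(Z)$ is linear and is exactly the corresponding element of $\GL(W/C(Z))$.

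This step is the only one needing care, and it is immediate from the surjectivity argument above. So I expect no real obstacle; the whole proof is an order count plus the remark that $\GLtwo{2}$ acts faithfully on the three nonzero vectors.
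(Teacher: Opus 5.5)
Your proposal is correct and takes essentially the same route as the paper: the permutation action of $\GLtwo{2}$ on the three nonzero vectors is faithful, the order count $|\GLtwo{2}|=6=|S_3|$ makes it an isomorphism, and $q_{23}=q_{12}+q_{13}$ shows the pair action is the standard one. Your extra remarks (the direct surjectivity argument, the complement bijection showing the pair action is the natural faithful action of $S_3$, and the symmetric relation $q_{12}+q_{13}+q_{23}=0$ guaranteeing linearity) are harmless elaborations of that argument.
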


\begin{proof}
  A two-dimensional vector space over $\F_2$ has exactly three nonzero vectors.
  Every invertible linear map permutes them, giving an injective homomorphism \[ \GLtwo{2}\hookrightarrow S_3. \]
  Since
  \[
    |\GLtwo{2}|=(2^2-1)(2^2-2)=6=|S_3|,
  \]
  the map is an isomorphism.
  Under the basis $q_{12},q_{13}$, the third nonzero vector is $q_{23}=q_{12}+q_{13}$, so the component-pair action is exactly this standard action.
\end{proof}

\begin{proposition}
  \label{prop:imagePZ}
  For the constructed pair, \[ A(\PAut(V_X,V_Z))\cong P(Z)\leq S_3\cong\GLtwo{2}. \]
  More precisely, under the identification in \Cref{lem:S3GL2}, the image $A(\PAut(V_X,V_Z))$ equals $P(Z)$.
\end{proposition}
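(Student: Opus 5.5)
The plan is to factor the induced quotient action $A$ through the component-permutation map and then use the identification of \Cref{lem:S3GL2}. Start from \Cref{lem:PAut-code}, which gives $\PAut(V_X,V_Z)=\PAut(C(Z))$. By \Cref{prop:autgraphcode}, restriction to vertex coordinates gives a surjection $\nu:\PAut(C(Z))\twoheadrightarrow\Aut(Z)$. Composing $\nu$ with the map $\Aut(Z)\to P(Z)$, which records how an automorphism permutes the components $H_1,H_2,H_3$, yields a surjective homomorphism $c:\PAut(V_X,V_Z)\twoheadrightarrow P(Z)\leq S_3$. It is surjective because both factors are.

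Next, show that $A(\pi)$ depends only on $c(\pi)$. As in the proof of \Cref{lem:PAut-code}, each $\pi$ satisfies $\pi(s_i)=s_{c(\pi)(i)}$. The reason is that $\pi$ maps the vertex coordinates of $H_i$ onto those of $H_{c(\pi)(i)}$, and $s_i$ has zero private part, so the private-coordinate freedom in $\ker\nu$ does not affect it. Since $\pi(C(Z))=C(Z)$, the induced map on $W/C(Z)$ sends $q_{ij}\mapsto q_{c(\pi)(i)\,c(\pi)(j)}$. By \Cref{lem:directsum}, $q_{12},q_{13}$ form a basis of $Q_X=W/C(Z)\cong\F_2^2$. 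Therefore $A=\iota\circ c$, where $\iota:S_3\to\GLtwo{2}$ is the isomorphism of \Cref{lem:S3GL2} defined by the action on unordered pairs.

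Finally, conclude that $A(\PAut(V_X,V_Z))=\iota(c(\PAut(V_X,V_Z)))=\iota(P(Z))$. This gives the displayed isomorphism, and under the identification $\iota$ it is exactly the equality claimed.

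The only delicate point is checking that $\iota$ is really injective on the subgroup $P(Z)$, so that the image is $P(Z)$ itself and not a proper quotient of it. Injectivity holds because the action of $S_3$ on the three unordered pairs $\{12,13,23\}$ is faithful: each pair is the complement of a single point, so this action is isomorphic to the natural action. Combined with \Cref{lem:component-group}, this immediately gives image $\GLtwo{2}$ when $X\cong Y$, and a subgroup of order $2$ otherwise.
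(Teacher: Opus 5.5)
Your proposal is correct and follows essentially the same route as the paper: reduce to $\PAut(C(Z))$ and show the quotient action depends only on the induced component permutation. You verify $\pi(s_i)=s_{c(\pi)(i)}$ directly, which subsumes the paper's remark that $\ker\nu$ acts trivially; you then use surjectivity onto $P(Z)$ and transport via \Cref{lem:S3GL2}. Your explicit faithfulness check for the $S_3$-action on unordered pairs is a small addition the paper leaves implicit.
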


\begin{proof}
  By \Cref{lem:PAut-code}, $\PAut(V_X,V_Z)=\PAut(C(Z))$.
  The kernel of $\nu$ from \Cref{prop:autgraphcode} fixes every vertex coordinate and therefore fixes every $s_i$, hence acts trivially on $W/C(Z)$.
  Thus the quotient action factors through $\nu(\PAut(V_X,V_Z))=\Aut(Z)$ and depends only on the induced component permutation.
  Every component permutation in $P(Z)$ is realized by some graph automorphism and therefore by some permutation automorphism.
  By \Cref{lem:S3GL2}, the resulting linear action on the quotient is exactly the corresponding subgroup of $\GLtwo{2}$.
\end{proof}

\begin{proof}[Proof of \Cref{thm:GI-hard}]
  Given non-empty graphs $(X,Y)$, construct $X',Y',Z$, then $C(Z)$, $L_0$, $W$, and finally the pair $(V_X,V_Z)$ as above.
  By \Cref{prop:imagePZ} and \Cref{lem:component-group}, \[ A(\PAut(V_X,V_Z))=\GLtwo{2} \iff P(Z)=S_3 \iff X\cong Y. \]
  Thus the construction is a correct many-one reduction.

  It remains to check polynomial time.
  The join with $K_2$ adds two vertices and $2|V(X)|+1$ edges.
  The disjoint union $Z$ therefore has size polynomial in the input.
  If $Z$ has $M$ edges and $N$ vertices, the generator matrix $M_Z$ has dimensions $M\times(3M+N)$, hence polynomial size.
  The space $W$ is generated by the rows of $M_Z$ together with two additional rows $s_1+s_2$ and $s_1+s_3$.
  A generator matrix for $V_Z=W^\perp$ is obtained by Gaussian elimination over $\F_2$, in polynomial time.
  The quotient dimension is exactly two by \Cref{lem:directsum}.
  Therefore the reduction is polynomial-time and maps into $\SURJ_2$.
\end{proof}

\begin{remark}[Why this is not an NP-completeness proof]
  The theorem proves GI-hardness, not NP-hardness.
  Graph Isomorphism is not known to be NP-complete.
  Thus the reduction cannot by itself establish NP-hardness or NP-completeness of $\SURJ$.
  This distinction is essential.
  The classical code-equivalence literature similarly uses reductions from Graph Isomorphism to establish GI-hardness without implying NP-completeness.
\end{remark}

\subsection{GI-hardness of logical relabelling}\label{sec:logical-relabelling}

We now consider the weaker requirement that physical permutations realize every permutation of a fixed logical basis, corresponding to all logical SWAP gates.

\begin{definition}[Full logical relabelling]
  \label{def:full-logical-relabelling}
  Let $(V_X,V_Z)$ be a CSS code with logical $X$ quotient $Q_X$ of dimension $k$, and fix an ordered basis $\mathcal E=(e_1,\ldots,e_k)$ of $Q_X$.
  Let $S(\mathcal E)\leq\GL(Q_X)$ be the subgroup that permutes the basis vectors in $\mathcal E$, so $S(\mathcal E)\cong S_k$.
  The code supports \emph{full logical relabelling} with respect to $\mathcal E$ if \[ S(\mathcal E)\leq A(\PAut(V_X,V_Z)). \]
\end{definition}

\begin{theorem}[GI-hardness of logical relabelling]
  \label{thm:logical-relabelling-GI-hard}
  For every fixed $k\geq2$, deciding whether a binary CSS code with $k$ logical qubits and a specified basis of $Q_X$ supports full logical relabelling is GI-hard.
\end{theorem}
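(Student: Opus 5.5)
We generalize the GI reduction of \Cref{thm:GI-hard} from three components to $k+1$ components. Given a GI instance $(X,Y)$ on the same number of vertices, we form $X'=X\vee K_2$ and $Y'=Y\vee K_2$ by \Cref{lem:join}. We then set
\[
  Z=X'\sqcup \underbrace{Y'\sqcup\cdots\sqcup Y'}_{k\text{ copies}},
\]
with components $H_0=X'$ and $H_1,\dots,H_k$ (copies of $Y'$). The argument of \Cref{lem:component-group} carries over verbatim. If $X\cong Y$, the induced component group $P(Z)$ is all of $S_{k+1}$. Otherwise $H_0$ is the unique component isomorphic to $X'$, so $P(Z)$ is exactly the stabilizer of $H_0$, namely $S_k$ acting on $\{H_1,\dots,H_k\}$. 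Note that $\delta(Z)\ge2$ still holds.

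We build $C(Z)$ as before and let $s_0,\dots,s_k$ be the component vertex indicators. Following \Cref{lem:directsum}, we would take
\[
  L_0=\Span\{s_0+s_i:1\le i\le k\},
\]
the even-weight subspace of $\Span\{s_0,\dots,s_k\}$. It has dimension $k$ and meets $C(Z)$ trivially. We set $V_X=C(Z)$, $W=C(Z)\oplus L_0$, $V_Z=W^\perp$, so the code has exactly $k$ logical qubits. The argument of \Cref{lem:PAut-code} goes through unchanged, because $L_0$ is invariant under every permutation of the $s_i$. Hence $\PAut(V_X,V_Z)=\PAut(C(Z))$. Then, as in \Cref{prop:imagePZ}, $\ker\nu$ acts trivially on the quotient, so the image $A(\PAut)$ is exactly the image of $P(Z)$ under the linear action of $S_{k+1}$ on $Q_X\cong L_0$, induced by permuting the $s_i$.

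The key step is choosing the specified basis $\mathcal E$. The natural basis $e_i=s_0+s_i$ is bad. Permutations fixing index $0$ permute these $e_i$, and such permutations exist whether or not $X\cong Y$, so both cases would answer yes. We must instead choose a basis whose permutation group $S(\mathcal E)$ requires moving $H_0$. Since the action of $S_{k+1}$ on the $k$-dimensional space $L_0$ is faithful for $k\ge2$ (its kernel is a normal subgroup of $S_{k+1}$ that does not contain the transpositions, so it is trivial), the map $S_{k+1}\to\GL(L_0)$ is injective.

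The candidate construction is the basis
\[
  e_i:=s_i+s_{i+1}\pmod{L_0\text{ indexing}},
\]
a ``cyclic'' choice that treats $s_0$ on the same footing as the others. For $k\ge2$ it seems simplest to use a basis such as $e_1=s_0+s_1$ and $e_j=s_1+s_j$ for $j\ge2$. The transposition $(0\,j)$ then typically fails to permute this basis, so the correct choice needs care. The cleanest target is a basis $\mathcal E$ such that:
\begin{itemize}[nosep]
  \item[(a)] $S(\mathcal E)\subseteq$ the image of $S_{k+1}$, which requires that $S(\mathcal E)$ be realized by permutations of the $s_i$;
  \item[(b)] $S(\mathcal E)$ is not contained in the image of the stabilizer of $0$.
\end{itemize}
We then get: if $X\cong Y$, the image is all of $S_{k+1}\supseteq S(\mathcal E)$, giving a yes instance; if $X\not\cong Y$, the image is $\mathrm{Stab}(0)\not\supseteq S(\mathcal E)$, giving a no instance.

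For $k=2$ this works with any basis, because all of $\GL(2,\F_2)\cong S_3$ is realized and the swap of any two nonzero vectors is a transposition. In particular, the basis $\{s_0+s_1,\,s_0+s_2\}$ is swapped by $(12)$, which fixes $0$, so we should instead use $\{s_0+s_1,\,s_1+s_2\}$, which is swapped by $(02)$.

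For general $k$, the main obstacle is finding a basis that is permuted by a subgroup isomorphic to $S_k$ inside $S_{k+1}$ not fixing $0$. The plan is to pick a different point, say $k$, and use $e_i=s_i+s_k$ for $0\le i\le k-1$. These vectors lie in $L_0$, form a basis of it, and are permuted as $S_k$ on $\{0,\dots,k-1\}$ by exactly the permutations fixing $k$. This $S(\mathcal E)$ contains $(0\,1)$, which moves $0$, so (a) and (b) hold. Faithfulness also guarantees that $S(\mathcal E)$ is not realized when $X\not\cong Y$: the only elements of $S_{k+1}$ acting as $(0\,1)$ on $L_0$ are $(0\,1)$ itself, and it does not lie in $\mathrm{Stab}(0)$. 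This gives the reduction, and it runs in polynomial time for fixed $k$.
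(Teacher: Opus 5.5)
Your final construction is correct, but it takes a different route from the paper. The paper uses only $k$ components, $H_1=X'$ and $H_2,\dots,H_k\cong Y'$, and sets $W=C(Z)\oplus\Span\{s_1,\dots,s_k\}$. That is, it uses the full permutation module rather than its even-weight submodule. Then $Q_X$ has the natural basis $\mathcal E=(s_i+V_X)_i$, and the induced action is literally the component permutation acting on $\mathcal E$. In the no-case, $s_1+V_X\in\mathcal E$ is fixed by every automorphism, which immediately rules out $S(\mathcal E)$. So the paper needs no careful basis choice and no faithfulness argument.

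You instead keep the $(k+1)$-component, even-weight setup of the $k=2$ reduction. You then have to choose a pivot basis $e_i=s_i+s_k$ whose realizing subgroup $\mathrm{Stab}(k)$ moves $H_0$. This works. It also has a mild advantage: for $k=2$ your code is exactly the instance from \Cref{thm:GI-hard}, so a single family of codes witnesses both hardness results. The cost is an extra component and an extra argument.

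That extra argument needs repair. You justify faithfulness by saying the kernel is a normal subgroup of $S_{k+1}$ containing no transpositions and is therefore trivial. This inference is invalid: $A_{k+1}$, and $V_4$ when $k=3$, are normal and contain no transpositions. There are two easy fixes.
\begin{itemize}
\item Prove faithfulness directly. If $\sigma(i)=j\neq i$, choose $l\notin\{i,j\}$, which is possible since $k+1\geq3$. Then $\sigma(s_i+s_l)=s_j+s_{\sigma(l)}\neq s_i+s_l$.
\item Bypass faithfulness. Every $\sigma\in\mathrm{Stab}(0)$ sends $e_0=s_0+s_k$ to $s_0+s_{\sigma(k)}\neq s_1+s_k=e_1$. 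So no element of the no-case image realizes the swap $e_0\leftrightarrow e_1$, which lies in $S(\mathcal E)$.
\end{itemize}

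Finally, delete the exploratory remarks: the abandoned ``cyclic'' basis, and the claim that for $k=2$ ``any basis works''. Your own next sentence correctly refutes that claim, since $\{s_0+s_1,s_0+s_2\}$ is swapped by $(1\,2)\in\mathrm{Stab}(0)$.
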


\begin{proof}
  We reduce from Graph Isomorphism on non-empty graphs of equal order; unequal orders are immediate no-instances.
  Given $(X,Y)$, set \[ X':=X\vee K_2, \qquad Y':=Y\vee K_2. \]
  By \Cref{lem:join}, both graphs are connected with minimum degree at least two, and $X'\cong Y'$ if and only if $X\cong Y$.

  Form the disjoint union \[ Z:=H_1\sqcup H_2\sqcup\cdots\sqcup H_k, \] where $H_1:=X'$ and $H_2,\ldots,H_k$ are independent copies of $Y'$.
  Construct the incidence code $C(Z)$ as above.
  For each $i\in[k]$, let $s_i$ be the indicator vector of the vertex coordinates in $H_i$, with all private coordinates equal to zero, and set \[ W:=C(Z)\oplus\Span\{s_1,\ldots,s_k\}. \]
  This sum is direct because every nonzero word of $C(Z)$ has a nonzero private part, whereas every vector in $\Span\{s_1,\ldots,s_k\}$ has zero private part.
  Define \[ V_X:=C(Z), \qquad V_Z:=W^\perp. \]
  Then $V_X\subseteq V_Z^\perp=W$, and the logical $X$ quotient is \[ Q_X=V_Z^\perp/V_X=W/C(Z). \]
  It has dimension $k$ with specified basis \[ \mathcal E:=\bigl(s_1+V_X,\ldots,s_k+V_X\bigr). \]

  We claim that the induced action on $Q_X$ is exactly the permutation action of $\Aut(Z)$ on the connected components $H_1,\ldots,H_k$.
  Certainly $\PAut(V_X,V_Z)\leq\PAut(C(Z))$ because $V_X=C(Z)$.
  Conversely, \Cref{prop:autgraphcode} shows that every element of $\PAut(C(Z))$ induces an automorphism of $Z$ and hence permutes its connected components.
  It therefore permutes $s_1,\ldots,s_k$, preserves $W$, and consequently preserves $V_Z=W^\perp$.
  Thus \[ \PAut(V_X,V_Z)=\PAut(C(Z)). \]
  The kernel of the map $\nu:\PAut(C(Z))\twoheadrightarrow\Aut(Z)$ fixes every vertex coordinate and hence every $s_i$, so it acts trivially on $Q_X$.
  Therefore $A(\PAut(V_X,V_Z))$ is precisely the component-permutation group acting on $\mathcal E$.

  If $X\cong Y$, then all $k$ components of $Z$ are isomorphic, and componentwise isomorphisms realize every permutation of them.
  Hence \[ A(\PAut(V_X,V_Z))=S(\mathcal E)\cong S_k. \]
  If $X\not\cong Y$, then $H_1\not\cong H_i$ for every $i\geq2$, so every automorphism of $Z$ fixes $H_1$ and can only permute the other $k-1$ components.
  The induced image therefore fixes $s_1+V_X$ and does not contain $S(\mathcal E)$.
  Thus the constructed code supports full logical relabelling if and only if $X\cong Y$.

  Since $k$ is fixed, the graph $Z$, the generator matrix of $C(Z)$, the CSS spaces, and the specified quotient basis all have size polynomial in the input.
  This completes the reduction.
\end{proof}

\section{Fixed-distance phantom qLDPC codes}\label{sec:constructions}

The main result of this section is the following.

\phantomqldpc*

We prove \Cref{thm:qldpc-fixedD} in two steps.
First, a degree-split Hamming presentation gives a bounded-degree outer CSS family with full logical action and $k=\Theta(\log n)$, but with one sector distance equal to one (\Cref{sec:sparse-outer}).
Second, block concatenation with a fixed one-logical-qubit CSS code preserves the full logical action and bounded sparsity while multiplying both sector distances (\Cref{sec:concat-inner}).

\subsection{A sparse outer family}\label{sec:sparse-outer}

Fix $k\geq2$ and set \[ P:=\F_2^k\setminus\{0\}, \qquad N:=|P|=2^k-1. \]
Coordinates of $\F_2^P$ are indexed by the nonzero vectors $x\in\F_2^k$.
Define the simplex code and its dual Hamming code by \[ S_k:=\{s_a=(a\cdot x)_{x\in P}:a\in\F_2^k\}, \qquad H_k:=S_k^\perp. \]

\begin{proposition}[\cite{macwilliams1977theory,zumbragel2012pseudocodeword}]
  \label{prop:simplex-hamming-facts}
  The following standard facts hold.
  \begin{enumerate}
    \item
          The dimensions are $\dim S_k=k$ and $\dim H_k=N-k$.
    \item
          Every nonzero word of $S_k$ has weight $2^{k-1}$.
    \item
          The code $H_k$ has a basis consisting of weight-three codewords.
  \end{enumerate}
\end{proposition}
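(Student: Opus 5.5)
The plan is to prove the three items in order, working throughout with the evaluation map $a\mapsto s_a$, $s_a=(a\cdot x)_{x\in P}$.

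For item 1, this map is linear from $\F_2^k$ onto $S_k$. If $s_a=0$, then $a\cdot x=0$ for every nonzero $x$, in particular for each standard basis vector $e_i$. That forces $a=0$, so the map is injective and $\dim S_k=k$. Because the standard bilinear form on $\F_2^P$ is nondegenerate, $\dim H_k=\dim S_k^\perp=N-k$.

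For item 2, fix $a\neq0$. The linear functional $x\mapsto a\cdot x$ on $\F_2^k$ is then nonzero, so its kernel is a hyperplane of size $2^{k-1}$. Its complement, the set of $x$ with $a\cdot x=1$, also has size $2^{k-1}$ and contains no zero vector. Hence $\wt(s_a)=2^{k-1}$.

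Item 3 is the only step that requires a construction. I first characterize $H_k$. A vector $c\in\F_2^P$, viewed as a set $T\subseteq P$ of nonzero vectors, is orthogonal to every $s_a$ exactly when $a\cdot\sum_{x\in T}x=0$ for all $a$. This holds exactly when $\sum_{x\in T}x=0$ in $\F_2^k$. So $H_k$ is the kernel of the $k\times N$ matrix whose columns are the elements of $P$.

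Next I build the weight-three basis. Enumerate $P$ so that the $k$ standard basis vectors $e_1,\dots,e_k$ come first. For each of the remaining $N-k$ points $y\notin\{e_1,\dots,e_k\}$, I want a weight-three word of $H_k$ supported at $y$ and at points in the span of earlier data. Arguing directly, for each such $y$ I would choose a decomposition $y=u+v$ with $u,v$ nonzero, distinct from $y$ and from each other. For example, take $u=e_i$ for some $i$ in the support of $y$ and $v=y+e_i$, which is nonzero because $y\neq e_i$. The indicator $c_y$ of $\{y,e_i,y+e_i\}$ has weight three and lies in $H_k$.

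The main obstacle is linear independence, since $y+e_i$ may itself be a non-basis point. I would handle this with a triangularity argument. Order the non-basis points by Hamming weight and choose $i$ to be the largest index in $\supp(y)$. Then $y+e_i$ has strictly smaller Hamming weight than $y$, so it is either a basis vector $e_j$ or an earlier non-basis point. In the resulting $(N-k)\times N$ matrix of the words $c_y$, restricted to the non-basis columns, each row $c_y$ has its entry at $y$ together with at most one entry at a point of lower Hamming weight. The matrix is therefore unitriangular with respect to the weight order. So the $N-k$ words are independent, and since $\dim H_k=N-k$ they form a basis.
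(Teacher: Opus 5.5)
Your proof is correct and follows the paper's argument essentially verbatim: you use the same weight-three words $\delta_y+\delta_{e_i}+\delta_{y+e_i}$ for non-unit $y$, with independence from unitriangularity in the Hamming-weight order on the non-unit columns. The differences are cosmetic. You prove injectivity of $a\mapsto s_a$ by testing on the $e_i$, and you make the parity-check description of $H_k$ explicit. Your choice of the largest index in $\supp(y)$ is unnecessary, since any $i\in\supp(y)$ already makes $y+e_i$ strictly lighter than $y$.
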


\begin{proof}
  For $a\neq0$, the map $x\mapsto a\cdot x$ is a nonzero linear functional on $\F_2^k$, so it equals one on exactly $2^{k-1}$ vectors, all of which lie in $P$.
  Hence $a\mapsto s_a$ is injective, $\dim S_k=k$, every nonzero $s_a$ has weight $2^{k-1}$, and $\dim H_k=N-k$ by orthogonality.

  For the final claim, let $e_1,\ldots,e_k$ be the standard basis of $\F_2^k$, and for each non-unit $x\in P$ choose an index $i(x)$ with $x_{i(x)}=1$ and set
  \[
    b_x:=\delta_x+\delta_{x+e_{i(x)}}+\delta_{e_{i(x)}}.
  \]
  The three coordinates are distinct and sum to zero, so $b_x\in H_k$ and $\wt(b_x)=3$.
  Order the non-unit coordinates by nondecreasing Hamming weight; in the corresponding submatrix, the row $b_x$ has a one in column $x$, while its only possible other nonzero entry is in column $x+e_{i(x)}$, of strictly smaller weight.
  This submatrix is triangular with unit diagonal, so the $N-k$ vectors $b_x$ are independent and therefore form a basis of $H_k$.
\end{proof}

We apply the variable-copying degree-reduction gadget of \cite{hastings2021quantum,hastings2021fiber}, specialized here to the weight-three Hamming presentation.
Set \[ R:=N-k=\dim H_k \] and fix a weight-three basis $\mathcal B=\{b_1,\ldots,b_R\}$ as in \Cref{prop:simplex-hamming-facts}.
For each $x\in P$, let \[ I_x:=\{j\in[R]:x\in\supp(b_j)\} \] and choose an injection \[ \rho_x:I_x\hookrightarrow[R]. \]
Replace every coordinate $x$ by $R$ copies, giving
\[
  \Omega:=P\times[R], \qquad M:=|\Omega|=NR=(2^k-1)(2^k-1-k).
\]
For $x\in P$ and $1\leq r<R$, define the equality check \[ e_{x,r}:=\delta_{(x,r)}+\delta_{(x,r+1)}. \]
If $b_j=\delta_x+\delta_y+\delta_z$, define its split lift by
\[
  \widehat b_j:= \delta_{(x,\rho_x(j))} +\delta_{(y,\rho_y(j))} +\delta_{(z,\rho_z(j))}.
\]
Define the outer CSS pair by
\[
  V_X^{\mathrm{out}}:=0, \qquad V_Z^{\mathrm{out}}:=\Span\bigl(\{e_{x,r}:x\in P,\ 1\leq r<R\}\cup\{\widehat b_j:j\in[R]\}\bigr).
\]

\begin{theorem}[Sparse outer phantom family]
  \label{thm:qldpc-outer}
  The pair $(V_X^{\mathrm{out}},V_Z^{\mathrm{out}})$ is a phantom CSS code encoding $k$ logical qubits.
  The displayed generators have maximum row weight and column degree at most three.
  Moreover,
  \[
    (V_Z^{\mathrm{out}})^\perp =\operatorname{Rep}_R(S_k) :=\{c\in\F_2^\Omega:c_{(x,r)}=(s_a)_x\text{ for some }a\in\F_2^k\text{ and all }x,r\},
  \]
  and the code satisfies
  \[
    d_X=R2^{k-1}, \qquad d_Z=1, \qquad A\bigl(\PAut(V_X^{\mathrm{out}},V_Z^{\mathrm{out}})\bigr)=\GLtwo{k}.
  \]
  Its length and logical dimension satisfies \[ M=\Theta(4^k), \qquad k=\Theta(\log M). \]
\end{theorem}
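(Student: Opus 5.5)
The proof proceeds by computing $(V_Z^{\mathrm{out}})^\perp$ explicitly; all remaining claims follow from that description.

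\textbf{Step 1: the dual.} A vector $c\in\F_2^\Omega$ is orthogonal to every equality check $e_{x,r}$ exactly when it is constant on each fibre $\{x\}\times[R]$, i.e.\ $c_{(x,r)}=u_x$ for some $u\in\F_2^P$. For such $c$, orthogonality to $\widehat b_j$ reads $\langle u,b_j\rangle=0$, since each $\widehat b_j$ meets each fibre in at most one coordinate. Because $\mathcal B$ is a basis of $H_k$ (\Cref{prop:simplex-hamming-facts}), this holds for all $j$ iff $u\in H_k^\perp=S_k$. Hence $(V_Z^{\mathrm{out}})^\perp=\operatorname{Rep}_R(S_k)$, which has dimension $k$. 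Since $V_X^{\mathrm{out}}=0$, we get $Q_X\cong\operatorname{Rep}_R(S_k)$ and $\dim Q_X=k$.

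\textbf{Step 2: sparsity, parameters and distances.}
\begin{itemize}
  \item Row weights are $2$ for $e_{x,r}$ and $3$ for $\widehat b_j$.
  \item A coordinate $(x,r)$ lies in at most two equality checks. It lies in at most one split lift, since $\rho_x$ is injective. This bounds the degree by three.
  \item $d_X$ is the minimum weight of a nonzero element of $\operatorname{Rep}_R(S_k)$, which is $R\cdot 2^{k-1}$ by part 2 of \Cref{prop:simplex-hamming-facts}.
  \item $d_Z=1$: any weight-one vector $\delta_{(x,r)}$ lies in $(V_X^{\mathrm{out}})^\perp=\F_2^\Omega$. It is not in $V_Z^{\mathrm{out}}$, because it is not orthogonal to the repetition of $s_a$ for any $a$ with $a\cdot x=1$.
  \item $M=NR=(2^k-1)(2^k-1-k)=\Theta(4^k)$, so $k=\Theta(\log M)$.
\end{itemize}

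\textbf{Step 3: phantomness.} We must show $A$ is surjective. Since $V_X^{\mathrm{out}}=0$, a permutation $\pi$ lies in $\PAut$ iff it preserves $V_Z^{\mathrm{out}}$, equivalently iff it preserves $\operatorname{Rep}_R(S_k)$. Take $F\in\GLtwo{k}$ and define $\pi_F(x,r)=(F^{-\top}x,r)$. Then $s_a\circ(F^{-\top})^{-1}$ has $x$-coordinate $a\cdot F^{\top}x=(Fa)\cdot x$, so $\pi_F$ maps $\operatorname{Rep}_R(s_a)$ to $\operatorname{Rep}_R(s_{Fa})$ (up to the convention for inverses). Thus $\pi_F$ preserves $(V_Z^{\mathrm{out}})^\perp$, hence also $V_Z^{\mathrm{out}}$. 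Under the identification $a\mapsto\operatorname{Rep}_R(s_a)$ of $Q_X$ with $\F_2^k$, the permutation $\pi_F$ acts as $F$ or $F^{-\top}$. Both families range over all of $\GLtwo{k}$, so $A$ is surjective.

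The main point needing care is the observation in Step 3 that $\pi_F$ preserves $V_Z^{\mathrm{out}}$ even though it does not map the chosen generators to generators. The split lifts depend on the arbitrary injections $\rho_x$, so $\pi_F$ generally scrambles them. This is harmless because $V_Z^{\mathrm{out}}$ equals $\operatorname{Rep}_R(S_k)^\perp$, a subspace defined intrinsically and independently of $\mathcal B$ and the $\rho_x$.
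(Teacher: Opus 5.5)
Your proposal is correct and follows the paper's proof essentially step for step. You compute $(V_Z^{\mathrm{out}})^\perp=\operatorname{Rep}_R(S_k)$ from the equality checks and the Hamming basis, read off $k$, $d_X$, $d_Z$ and the sparsity bounds, and realize $\GLtwo{k}$ by copy-preserving permutations that preserve $V_Z^{\mathrm{out}}$ through its orthogonal complement. Your choice $\pi_F(x,r)=(F^{-\top}x,r)$, giving $s_a\mapsto s_{Fa}$, is just a reparametrization of the paper's $\pi_T:(x,r)\mapsto(Tx,r)$, which gives $s_a\mapsto s_{T^{-\top}a}$.
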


\begin{proof}
  Every equality check has weight two and every split Hamming check has weight three.
  A copied coordinate lies in at most two equality checks and, by injectivity of $\rho_x$, at most one split Hamming check.
  Hence the displayed presentation has maximum row weight and column degree at most three.

  If $c\in(V_Z^{\mathrm{out}})^\perp$, orthogonality to the equality checks makes $c$ constant on the $R$ copies of every $x$, say $c_{(x,r)}=a_x$.
  Orthogonality to every $\widehat b_j$ then gives $\langle a,b_j\rangle=0$.
  Since the $b_j$ form a basis of $H_k$, we have $a\in H_k^\perp=S_k$.
  Conversely, every repeated simplex word is orthogonal to all displayed generators, proving the claimed description of $(V_Z^{\mathrm{out}})^\perp$.

  The space $(V_Z^{\mathrm{out}})^\perp$ has dimension $k$, so the pair encodes $k$ qubits because $V_X^{\mathrm{out}}=0$.
  Its logical $X$ quotient is $(V_Z^{\mathrm{out}})^\perp$, whose nonzero words have weight $R2^{k-1}$ by \Cref{prop:simplex-hamming-facts}, giving the stated $d_X$.
  For every $(x,r)\in\Omega$, some repeated simplex word is nonzero at $(x,r)$, so $\delta_{(x,r)}\notin V_Z^{\mathrm{out}}$ and hence $d_Z=1$.

  For $T\in\GLtwo{k}$, the coordinate permutation \[ \pi_T:(x,r)\longmapsto(Tx,r) \] sends the repeated word $s_a$ to $s_{T^{-\top}a}$.
  It therefore preserves $(V_Z^{\mathrm{out}})^\perp$, and hence preserves $V_Z^{\mathrm{out}}$, while fixing $V_X^{\mathrm{out}}=0$.
  On the logical $X$ quotient these permutations realize all of $\GLtwo{k}$, so the pair is phantom.
  Finally, $M=NR=\Theta(4^k)$.
\end{proof}

\begin{remark}
  The injections $\rho_x$ need not respect the $\GLtwo{k}$ action, so the sparse generators themselves need not be permuted by every $\pi_T$.
  Phantomness is nevertheless preserved because the permutations preserve the stabilizer space through its orthogonal complement.
\end{remark}

\subsection{Concatenation with a fixed inner code}\label{sec:concat-inner}

The outer family already has full logical action and bounded sparsity, but its $Z$-distance is one.
We repair this single defect by concatenating every outer coordinate with the same fixed one-logical-qubit CSS code.
This is the standard concatenated-code construction of \cite{knill1996concatenated}, specialized to CSS codes with a one-logical-qubit inner block.
In the phantom setting, the product parameters and preservation of outer-level phantomness are stated explicitly in \cite{koh2026entangling}.

Let $U_X,U_Z\leq\F_2^m$ satisfy
\[
  U_X\subseteq U_Z^\perp, \qquad m-\dim U_X-\dim U_Z=1.
\]
Choose minimum-weight logical representatives
\[
  x\in U_Z^\perp\setminus U_X, \qquad z\in U_X^\perp\setminus U_Z,
\]
and set
\[
  \delta_X:=\wt(x), \qquad \delta_Z:=\wt(z).
\]
The perfect pairing between the one-dimensional inner logical quotients gives $\langle x,z\rangle=1$.

Let $(V_X,V_Z)$ be an outer CSS pair of length $n_0$.
Replace each outer coordinate by a block of $m$ coordinates and define
\[
  L_X(a_1,\ldots,a_{n_0}):=(a_1x,\ldots,a_{n_0}x), \qquad L_Z(a_1,\ldots,a_{n_0}):=(a_1z,\ldots,a_{n_0}z).
\]
The concatenated CSS spaces are
\[
  \widetilde V_X:=U_X^{\oplus n_0}+L_X(V_X), \qquad \widetilde V_Z:=U_Z^{\oplus n_0}+L_Z(V_Z).
\]

\begin{proposition}[Standard block concatenation]
  \label{prop:block-concatenation}
  The concatenated pair has the following properties.
  \begin{enumerate}[label=(\roman*)]
    \item
          It is a CSS pair with the same logical dimension as $(V_X,V_Z)$.
    \item
          Its logical quotients identify naturally with those of the outer pair.
    \item
          Its distances satisfy \[ \widetilde d_X=\delta_Xd_X, \qquad \widetilde d_Z=\delta_Zd_Z. \]
    \item
          Every element of $\PAut(V_X,V_Z)$ lifts by permuting the inner blocks, with the same induced logical action.
    \item
          If the chosen inner and outer generating sets have row-weight bounds $w_{\mathrm{in}},w_{\mathrm{out}}$ and column-degree bounds $\Delta_{\mathrm{in}},\Delta_{\mathrm{out}}$, then the natural concatenated generators satisfy
          \[
            \widetilde w\leq\max\{w_{\mathrm{in}},\delta_Xw_{\mathrm{out}},\delta_Zw_{\mathrm{out}}\}, \qquad \widetilde \Delta\leq\Delta_{\mathrm{in}}+\Delta_{\mathrm{out}}.
          \]
  \end{enumerate}
\end{proposition}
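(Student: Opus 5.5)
The plan is to verify the five items directly from the definitions, with the distance bound (iii) as the only substantive step. For (i), isotropy is checked generator by generator. Within a single block, $U_X\perp U_Z$. Next, $U_X^{\oplus n_0}\perp L_Z(V_Z)$ because $z\in U_X^\perp$, and symmetrically $U_Z^{\oplus n_0}\perp L_X(V_X)$. Finally,
\[
  \langle L_X(a),L_Z(b)\rangle=\sum_i a_ib_i\langle x,z\rangle=\langle a,b\rangle=0 .
\]
For the dimension, note that $L_X$ and $L_Z$ are injective. Since $x\notin U_X$, the sum $U_X^{\oplus n_0}+L_X(V_X)$ is direct, and likewise for $Z$. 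Hence
\[
  \dim\widetilde V_X+\dim\widetilde V_Z=n_0(m-1)+\dim V_X+\dim V_Z,
\]
so the logical dimension equals $n_0-\dim V_X-\dim V_Z$.

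For (ii), I describe $\widetilde V_Z^\perp$ explicitly. A vector $c=(c_1,\dots,c_{n_0})$ lies in $\widetilde V_Z^\perp$ iff each $c_i\in U_Z^\perp=U_X+\F_2x$ and, writing $c_i=u_i+a_ix$, the vector $a=(a_i)$ satisfies $\langle a,b\rangle=0$ for all $b\in V_Z$, i.e.\ $a\in V_Z^\perp$. Here $a_i=\langle c_i,z\rangle$ is well defined. Thus
\[
  \widetilde V_Z^\perp=U_X^{\oplus n_0}+L_X(V_Z^\perp),
\]
and the map $c\mapsto a$ induces an isomorphism $\widetilde V_Z^\perp/\widetilde V_X\cong V_Z^\perp/V_X$. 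The $Z$ side is symmetric.

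The key step is (iii); I would prove both inequalities for the $X$ sector. For the upper bound, take $a$ attaining $d_X$; then $L_X(a)$ has weight $\delta_X\wt(a)$ and is nontrivial by (ii). For the lower bound, let $c\in\widetilde V_Z^\perp\setminus\widetilde V_X$ with decoded outer word $a$. By (ii), $a\in V_Z^\perp\setminus V_X$, so $\wt(a)\ge d_X$. For each $i$ with $a_i=1$, the block $c_i\in U_Z^\perp\setminus U_X$ is a nontrivial inner logical, so $\wt(c_i)\ge\delta_X$. This holds because $\delta_X$ is the inner $X$ distance, as $x$ was chosen of minimum weight. Summing gives $\wt(c)\ge\delta_Xd_X$. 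The main obstacle is keeping the decoding map well defined and matched to the quotient identification, which (ii) already handles.

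For (iv), given $\pi\in\PAut(V_X,V_Z)$, define $\widetilde\pi$ by moving block $i$ to block $\pi(i)$ with the identity inside each block. This preserves $U_X^{\oplus n_0}$ and $U_Z^{\oplus n_0}$ and satisfies $\widetilde\pi L_X=L_X\pi$, so it preserves both concatenated spaces. Under the identification of (ii) it induces the same action on the logical quotient. For (v), the natural generators are the inner generators placed in each block and $L_X(g)$, $L_Z(h)$ for outer generators $g,h$. These have weights at most $w_{\mathrm{in}}$, $\delta_Xw_{\mathrm{out}}$ and $\delta_Zw_{\mathrm{out}}$ respectively. A physical qubit lies in at most $\Delta_{\mathrm{in}}$ inner checks, and in at most $\Delta_{\mathrm{out}}$ lifted outer checks, namely those whose outer support contains its block. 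Hence $\widetilde\Delta\le\Delta_{\mathrm{in}}+\Delta_{\mathrm{out}}$.
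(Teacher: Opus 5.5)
Your proof is correct and follows the same route as the paper. The decoding map $c\mapsto(\langle c_i,z\rangle)_i$ is the paper's ``label each block by a bit'' identification for (ii), your blockwise bound in (iii) is its ``every occupied block contributes at least $\delta_X$'' argument, and (iv)--(v) match its block-permutation lift and incidence count; the only difference is that you verify (i)--(iii) explicitly, including $\widetilde V_Z^\perp=U_X^{\oplus n_0}+L_X(V_Z^\perp)$, where the paper defers to the standard concatenation literature.
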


\begin{proof}
  Items (i)--(iii) are the standard CSS specialization of quantum-code concatenation \cite{knill1996concatenated}: the one-dimensional inner logical quotients label each block by a bit, thereby identifying the concatenated logical quotients with the outer ones, and every occupied block contributes minimum weight $\delta_X$ or $\delta_Z$.
  For (iv), an outer coordinate permutation lifts by permuting whole inner blocks, preserves the concatenated spaces, and acts on their quotient labels exactly as it acts on the outer code \cite{koh2026entangling}.
  For (v), an internal generator retains weight at most $w_{\mathrm{in}}$, while a lifted outer generator has its weight multiplied by $\delta_X$ or $\delta_Z$.
  A physical coordinate belongs to at most $\Delta_{\mathrm{in}}$ internal generators and to lifts of at most $\Delta_{\mathrm{out}}$ outer generators, proving the sparsity bounds.
\end{proof}

\begin{proof}[Proof of \Cref{thm:qldpc-fixedD}]
  Fix $D\geq1$ and choose a one-logical-qubit CSS code with \[ \delta_X,\delta_Z\geq D. \]
  Such a code exists for every $D$, for example as a sufficiently large planar surface-code block \cite{dennis2002topological}.
  Because $D$ is fixed, its length, generator weights, column degree, and chosen logical representative weights are constants depending only on $D$.

  For each $k\geq2$, concatenate this inner code with the outer pair of \Cref{thm:qldpc-outer}.
  By \Cref{prop:block-concatenation}, the resulting CSS code encodes $k$ qubits, remains phantom, and has bounded row and column degrees depending only on $D$.
  Its sector distances satisfy
  \[
    \widetilde d_X=\delta_XR2^{k-1}\geq D, \qquad \widetilde d_Z=\delta_Z\geq D.
  \]
  If the fixed inner block has length $m_D$, then the total length is \[ n=m_D(2^k-1)(2^k-1-k)=\Theta_D(4^k), \] and hence $k=\Theta(\log n)$.
\end{proof}

The Steane code gives a concrete instance of the construction at distance three.

\begin{corollary}[Explicit distance-three family]
  \label{cor:qldpc-steane}
  For every $k\geq2$ there is a binary phantom qLDPC CSS code with
  \[
    n=7(2^k-1)(2^k-1-k), \qquad K=k, \qquad d=3,
  \]
  and a stabilizer presentation with maximum row weight and column degree at most nine.
  More precisely, \[ d_X=3(2^k-1-k)2^{k-1}, \qquad d_Z=3. \]
\end{corollary}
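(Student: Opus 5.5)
This corollary is obtained by instantiating the proof of \Cref{thm:qldpc-fixedD} with the Steane $[[7,1,3]]$ code as the fixed inner block. The plan is to record the inner parameters, then apply \Cref{prop:block-concatenation} to the outer family of \Cref{thm:qldpc-outer}.

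\textbf{Inner code.} The Steane code takes $U_X=U_Z=S_3\leq\F_2^7$, the simplex code of dimension three. Its dual is $H_3$, the $[7,4,3]$ Hamming code, so $m=7$ and $7-3-3=1$ logical qubit. The minimum-weight logical representatives are exactly the minimum-weight words of $H_3\setminus S_3$. Every nonzero word of $S_3$ has weight four, so any weight-three word of $H_3$ lies outside $S_3$. Hence $\delta_X=\delta_Z=3$.

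\textbf{Sparsity of the inner generators.} Take the three weight-four generators $s_{e_1},s_{e_2},s_{e_3}$, viewed as rows of the $3\times7$ matrix whose columns are the nonzero vectors of $\F_2^3$. The $X$ and $Z$ copies together give $w_{\mathrm{in}}=4$. A coordinate $x$ lies in $s_{e_i}$ iff $x_i=1$, so each coordinate is in at most three rows per sector. Counting both sectors gives $\Delta_{\mathrm{in}}\leq6$.

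\textbf{Applying concatenation.} The outer pair of \Cref{thm:qldpc-outer} has $V_X^{\mathrm{out}}=0$, so only $Z$-type outer generators are lifted. These have $w_{\mathrm{out}}\leq3$ and $\Delta_{\mathrm{out}}\leq3$. By \Cref{prop:block-concatenation}(v):
\begin{itemize}
  \item $\widetilde w\leq\max\{4,3\cdot3\}=9$;
  \item $\widetilde\Delta\leq6+3=9$.
\end{itemize}
By item (iii), $\widetilde d_X=3\cdot R2^{k-1}$ with $R=2^k-1-k$, and $\widetilde d_Z=3\cdot1=3$. For $k\geq2$ we have $R\geq1$, so $\widetilde d_X\geq 6$ and therefore $d=\min\{\widetilde d_X,\widetilde d_Z\}=3$. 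Items (i), (ii) and (iv) give logical dimension $k$ and phantomness. The length is $7M=7(2^k-1)(2^k-1-k)$.

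\textbf{Main subtlety.} The bound in (iii) is stated as an equality, which already gives the exact $d_Z=3$. If one prefers a direct check, take the Steane logical $z$ (weight three) on a single block. It lies in $\widetilde V_X^\perp$ because $\langle x,z\rangle$ pairs with the outer $X$-logicals, and these are unconstrained in $d_Z=1$ of the outer code. It is not in $\widetilde V_Z$ because the outer $\delta_{(x,r)}\notin V_Z^{\mathrm{out}}$.

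The only other point needing care is the column-degree count. Each physical qubit lies in lifts of at most three outer generators, since the support of $L_Z(\delta_\omega)$ is one block.
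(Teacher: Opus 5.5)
Your proposal is correct and follows the paper's proof. Both take the Steane code as $U_X=U_Z=S_3$ with $\delta_X=\delta_Z=3$, $w_{\mathrm{in}}=4$ and $\Delta_{\mathrm{in}}\le 6$, combine it with the outer presentation ($w_{\mathrm{out}},\Delta_{\mathrm{out}}\le 3$) via \Cref{prop:block-concatenation}, and obtain $w\le 9$, $\Delta\le 9$, $\widetilde d_X=3R2^{k-1}$ and $\widetilde d_Z=3$.

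Your optional direct check of $\widetilde d_Z\le 3$ is redundant given item (iii). Its justification for membership in $\widetilde V_X^\perp$ should simply be that $\widetilde V_X=U_X^{\oplus n_0}$ (because $V_X^{\mathrm{out}}=0$) and $z\in U_X^\perp$; pairing with outer $X$-logicals plays no role in that step.
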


\begin{proof}
  Take $U_X=U_Z$ to be the $[7,3,4]$ simplex code and choose a weight-three representative of its unique logical class, giving the Steane $[[7,1,3]]$ CSS code \cite{steane1996error}.
  Its standard presentation has row weight at most four and combined column degree at most six, while the chosen logical representatives have weight three.
  Applying \Cref{prop:block-concatenation} to the outer presentation of \Cref{thm:qldpc-outer} gives the stated distances and the bounds \[ w\leq\max\{4,3\cdot3\}=9, \qquad \Delta\leq6+3=9. \]
\end{proof}

\section{Distance no-go theorems}\label{sec:obstructions}

We now prove the complementary obstruction for arbitrary binary phantom stabilizer codes.

\stabilizernogo*

In \Crefrange{sec:sunflower}{sec:css-no-go}, we first prove the obstruction for CSS codes, where the separate logical $X$- and $Z$-quotients make the underlying structure clear.
Then \Cref{sec:stabilizer-no-go} extends the argument to general stabilizer codes and completes the proof of \Cref{thm:stabilizer-no-go}.

Throughout the CSS part of this section, fix an integer $W\geq1$ and suppose \[ V_X\subseteq V_Z^\perp\leq\F_2^n, \qquad w(V_X,V_Z)\leq W, \qquad A(\PAut(V_X,V_Z))=\GLtwo{k}. \]
For brevity, write $\PAut:=\PAut(V_X,V_Z)$ throughout this part.
Write $d_X,d_Z,d$ as in \Cref{sec:setup}.

\subsection{Sunflowers and canonical low-weight atoms}\label{sec:sunflower}

We begin with the classical sunflower lemma in a form sufficient for our purposes \cite{erdos1960intersection}.

\begin{lemma}[Sunflower lemma]
  \label{lem:sunflower}
  Let $s\geq0$ and $r\geq2$.
  If a finite family $\mathcal F$ of distinct sets, each of cardinality at most $s$, has more than \[ s!
    (r-1)^s \]
  members, then it contains distinct sets $S_1,\ldots,S_r$ and a set $C$ such that
  \[
    S_a=C\sqcup P_a
  \]
  with pairwise disjoint petals $P_1,\ldots,P_r$.
\end{lemma}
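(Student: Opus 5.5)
I will prove the lemma by the classical Erdős--Rado induction on $s$, showing the contrapositive in the form: every family $\mathcal F$ of distinct sets of size at most $s$ with no $r$-sunflower has $|\mathcal F|\leq s!\,(r-1)^s$. For the base case $s=0$, the only possible member is $\emptyset$, so $|\mathcal F|\leq 1=0!\,(r-1)^0$. The bound is therefore consistent, and a family with more than one member cannot occur at all. A degenerate point here is that a sunflower with $r$ distinct sets cannot exist when $s=0$. This is harmless, because the hypothesis ``more than $s!(r-1)^s$ members'' is then never satisfied.

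For the inductive step, take $s\geq1$ and suppose $|\mathcal F|>s!(r-1)^s$. I choose a maximal subfamily $\mathcal G=\{G_1,\ldots,G_t\}\subseteq\mathcal F$ of pairwise disjoint sets. If $t\geq r$, then $G_1,\ldots,G_r$ already form a sunflower with core $C=\emptyset$. One subtlety is that at most one of them can be empty, but distinctness still holds and the petals are disjoint, so this causes no problem. Otherwise $t\leq r-1$. Let $U=\bigcup_i G_i$, so $|U|\leq s(r-1)$. By maximality, every member of $\mathcal F$ meets $U$, except possibly the empty set when $\mathcal G$ contains no empty set. To keep the counting clean, I remove $\emptyset$ from $\mathcal F$ at the outset, which costs at most one set. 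The inequality then still leaves more than $s!(r-1)^s-1$ nonempty sets, each of which meets $U$.

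By pigeonhole, some $u\in U$ lies in at least
\[
\frac{|\mathcal F|-1}{s(r-1)}
\]
of these sets. I then form the link family $\mathcal F_u=\{F\setminus\{u\}: u\in F\in\mathcal F\}$. Its members are distinct, since the original sets all contain $u$, and each has size at most $s-1$. The step I expect to be the main obstacle is making this count exceed $(s-1)!(r-1)^{s-1}$ despite the $-1$ lost to the empty set. To handle it, I will not delete $\emptyset$ but argue directly. If $\emptyset\in\mathcal F$, then every other set intersects $U$ anyway provided $\mathcal G$ was chosen to consist of nonempty sets. With that choice the count is $|\mathcal F|-1\geq s!(r-1)^s$, and pigeonhole gives at least $(s-1)!(r-1)^{s-1}$ sets containing $u$. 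To get a strict excess, I can instead prove the slightly stronger integer bound by induction, namely that no sunflower forces $|\mathcal F|\leq s!(r-1)^s$. With $\mathcal G$ nonempty-valued, $\mathcal F\setminus\{\emptyset\}$ is covered by at most $s(r-1)$ stars, each of size at most $(s-1)!(r-1)^{s-1}$ by induction. This gives $|\mathcal F|\leq s!(r-1)^s+1$.

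The $+1$ is absorbed as follows. If $\emptyset\in\mathcal F$ together with $r-1$ pairwise disjoint nonempty sets, then these already give an $r$-sunflower with core $\emptyset$. Hence $t\leq r-2$ in that case, and the bound $s(r-2)(s-1)!(r-1)^{s-1}+1\leq s!(r-1)^s$ holds whenever $s\geq1$. Finally, the induction hypothesis applied to a star $\mathcal F_u$ that is too large yields $r$ distinct sets $C'\sqcup P_a$. Adding $u$ back gives sets $S_a=(C'\cup\{u\})\sqcup P_a$, which are distinct, have size at most $s$, and have pairwise disjoint petals. This completes the induction.
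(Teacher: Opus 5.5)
Your final argument is correct and is the same Erd\H{o}s--Rado induction the paper uses: a maximal pairwise disjoint subfamily, covering by at most $s(r-1)$ stars through its union, and induction on the link families. The only difference is the empty set: the paper pads every set with private dummy elements so the family is $s$-uniform and every member automatically meets the union, whereas you take $\mathcal G$ among the nonempty sets and use $\emptyset$ as an extra petal to force $t\le r-2$, which absorbs the extra $+1$ (your abandoned first attempt of deleting $\emptyset$ up front should be dropped from a write-up).
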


\begin{proof}
  Pad every $S\in\mathcal F$ to cardinality exactly $s$ by adjoining $s-|S|$ new dummy elements that occur in no other padded set.
  An $r$-sunflower among the padded sets has no dummy element in its common core, so deleting all dummies gives an $r$-sunflower in the original family.
  It therefore suffices to treat an $s$-uniform family.

  We argue by induction on $s$.
  The case $s=0$ is vacuous because there is only one distinct empty set.
  For $s\geq1$, choose a maximal pairwise disjoint subfamily $\mathcal M\subseteq\mathcal F$.
  If $|\mathcal M|\geq r$, any $r$ members form a sunflower with empty core.
  Otherwise \[ X:=\bigcup_{M\in\mathcal M}M \] has at most $s(r-1)$ elements.
  Maximality implies that every member of $\mathcal F$ meets $X$.
  Hence some $x\in X$ belongs to more than \[ \frac{s!
      (r-1)^s}{s(r-1)}=(s-1)!(r-1)^{s-1} \]
  members of $\mathcal F$.
  Delete $x$ from all those sets.
  The resulting sets remain distinct and are $(s-1)$-uniform, so induction gives an $r$-sunflower.
  Restoring $x$ adds it to the common core.
\end{proof}

For a nonzero binary word $a$, its support determines the word uniquely.
This lets us define an intrinsic set of minimal low-weight generators.

\begin{definition}[Atom]
  \label{def:atom}
  For $i\in\{X,Z\}$, a nonzero word $a\in V_i$ of weight at most $W$ is an \emph{atom} if no nonzero word of $V_i$ has support strictly contained in $\supp(a)$.
  Let $\mathcal A_i$ be the set of all atoms of $V_i$ having weight at most $W$.
\end{definition}

We call $\mathcal A_i$ canonical because it is determined solely by $V_i$ and support containment, without any choice of basis or generating set.
Equivalently, $a$ is atomic if it cannot be written as a sum $a=b+c$ of two nonzero codewords of $V_i$ with disjoint supports.

\begin{lemma}[Atomic incidence bound]
  \label{lem:atomic-incidence}
  Assume $d_X,d_Z>W$.
  Then each $\mathcal A_i$ is a canonical $\PAut$-invariant spanning set of $V_i$, every atom has weight at most $W$, and every physical coordinate belongs to the supports of at most \[ F(W):=W! W^W \] atoms from $\mathcal A_i$.
\end{lemma}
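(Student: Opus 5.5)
We prove the three claims in order: invariance, spanning, and the incidence bound. Only the incidence bound uses the hypothesis $d_X,d_Z>W$, and it is the main step.

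\emph{Invariance.} \Cref{def:atom} refers only to the subspace $V_i$, to Hamming weight, and to support containment. Every $\pi\in\PAut$ preserves $V_i$ and Hamming weight, and maps supports to supports. Hence $\pi$ maps atoms to atoms, so $\pi(\mathcal A_i)=\mathcal A_i$. No basis choice is involved, which is why $\mathcal A_i$ is canonical.

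\emph{Spanning.} Since $w(V_i)\leq W$, the space $V_i$ is spanned by its nonzero words of weight at most $W$. So it suffices to show, by induction on weight, that every such word $a$ is a sum of atoms. If $a$ is not an atom, pick a nonzero $b\in V_i$ with $\supp(b)\subsetneq\supp(a)$. Then $a+b\in V_i$ is nonzero and has support $\supp(a)\setminus\supp(b)$. Thus $a=b+(a+b)$ is a sum of two nonzero words of strictly smaller weight, and induction finishes. Every atom has weight at most $W$ by the definition of $\mathcal A_i$.

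\emph{Incidence bound.} Take $i=X$; the $Z$ case is symmetric. Fix a coordinate $j$ and suppose more than $F(W)=W!\,W^W$ atoms contain $j$. Their supports are distinct (a support determines the word) and have size at most $W$. Applying \Cref{lem:sunflower} with $s=W$ and $r=W+1$ gives atoms $a_1,\ldots,a_{W+1}$ with supports $C\sqcup P_a$ and pairwise disjoint petals. Since every support contains $j$, we have $j\in C$, so $C\neq\emptyset$.

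Let $c$ be the indicator vector of $C$. We first show $c\in V_Z^\perp$. Since $V_Z$ is spanned by words of weight at most $W$, it suffices to check $\langle z,c\rangle=0$ for every $z\in V_Z$ with $\wt(z)\leq W$.
\begin{itemize}
  \item Because $a_a\in V_X\subseteq V_Z^\perp$, for each $a$ we have $\langle z,c\rangle=\langle z,a_a\rangle+\langle z,\mathbf 1_{P_a}\rangle=\langle z,\mathbf 1_{P_a}\rangle$.
  \item The $W+1$ petals are pairwise disjoint and $z$ has at most $W$ nonzero coordinates, so $z$ misses some petal entirely.
  \item For that petal the right-hand side vanishes, hence $\langle z,c\rangle=0$.
\end{itemize}
Next, $\wt(c)=|C|\leq W<d_X$ and $c\in V_Z^\perp$, so $c$ cannot be a nontrivial logical word; therefore $c\in V_X$.

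Finally, the supports $C\sqcup P_a$ are distinct, so at most one petal is empty. Choose $a$ with $P_a\neq\emptyset$. Then $c$ is a nonzero word of $V_X$ with $\supp(c)=C\subsetneq\supp(a_a)$, contradicting the atomicity of $a_a$. Hence at most $F(W)$ atoms contain $j$.

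The crux is recognising that the sunflower core itself is the forbidden low-weight word, and that $r=W+1$ petals is exactly what forces every low-weight check to miss one petal; with $s=W$ this choice of $r$ reproduces the constant $F(W)=W!\,W^W$.
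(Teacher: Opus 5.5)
Your proof is correct and follows essentially the same route as the paper. Both arguments apply the sunflower lemma with $s=W$ and $r=W+1$, use the bounded-weight spanning set of the opposite sector to put the core indicator in the orthogonal complement, and then use the distance bound to put it in the stabilizer space. The only difference is cosmetic: you contradict atomicity directly from the strict containment $C\subsetneq\supp(a_a)$, whereas the paper also places a nonempty petal in the code and invokes the equivalent disjoint-sum characterization of atoms.
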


\begin{proof}
  Take a basis of $V_i$ whose words have weight at most $W$.
  Whenever a basis word is not atomic, split it as a disjoint sum of two nonzero words of strictly smaller support.
  Iterating terminates and expresses every basis word as a sum of atoms of weight at most $W$.
  Hence $\mathcal A_i$ spans $V_i$.
  The definition depends only on the subspace and support containment, so every coordinate permutation preserving $V_i$ permutes $\mathcal A_i$.

  Fix a coordinate $q$ and suppose more than $F(W)$ atoms of $\mathcal A_Z$ contain $q$.
  By \Cref{lem:sunflower} with $s=W$ and $r=W+1$, there are distinct atomic supports \[ S_a=C\sqcup P_a, \qquad 1\leq a\leq W+1, \] whose petals are pairwise disjoint.
  Since $q\in C$, the core is nonempty.

  Choose a basis $B_X$ of $V_X$ whose words have weight at most $W$.
  For any $x\in B_X$, at least one of the $W+1$ petals is disjoint from $\supp(x)$: a weight-$W$ vector cannot meet all $W+1$ pairwise disjoint nonempty petals, and an empty petal is automatically missed.
  If $P_a$ is missed, then, writing $\mathbf 1_S$ for the binary indicator of a support $S$, \[ 0=\langle x,\mathbf 1_{S_a}\rangle =\langle x,\mathbf 1_C\rangle+\langle x,\mathbf 1_{P_a}\rangle =\langle x,\mathbf 1_C\rangle, \] because $\mathbf 1_{S_a}\in V_Z\subseteq V_X^\perp$.
  Applying the same equality to each $S_b$ gives \[ \langle x,\mathbf 1_{P_b}\rangle=0 \qquad\text{for all }b. \]
  Since this holds for every $x\in B_X$,
  \[
    \mathbf 1_C,\mathbf 1_{P_b}\in V_X^\perp.
  \]
  Now $0<|C|\leq W<d_Z$, so $\mathbf 1_C$ cannot represent a nontrivial $Z$-logical class and therefore lies in $V_Z$.
  At least one petal, say $P_b$, is nonempty because the supports $S_a$ are distinct.
  Again $|P_b|\leq W<d_Z$ implies $\mathbf 1_{P_b}\in V_Z$.
  Consequently \[ \mathbf 1_{S_b}=\mathbf 1_C+\mathbf 1_{P_b} \] is a disjoint sum of two nonzero words of $V_Z$, contradicting atomicity.
  Thus at most $F(W)$ atoms of $\mathcal A_Z$ contain $q$.

  Interchanging $V_Z$ and $V_X$ and using $d_X>W$ proves the same bound for $\mathcal A_X$.
\end{proof}

\subsection{The unique logically active incidence component}

Form the sector-labelled bipartite incidence graph $\Gamma$: its vertices are the $n$ physical coordinates and the atoms in $\mathcal A_X\sqcup\mathcal A_Z$, and an atom is adjacent to every coordinate in its support.
The sector label records whether an atom belongs to $\mathcal A_X$ or $\mathcal A_Z$.

\begin{lemma}[Active component]
  \label{lem:active-component}
  Assume $d_X,d_Z>W$ and $k>0$.
  Then precisely one connected component $\Gamma_*$ of $\Gamma$ has nonzero logical quotient.
  This component is $\PAut$-invariant, and the image \[ H\leq\Aut(\Gamma_*) \] of the restriction map from $\PAut$ has a quotient isomorphic to $\GLtwo{k}$.
  Moreover, \[ |V(\Gamma)|\leq (1+2F(W))n, \qquad \Delta(\Gamma)\leq \Delta(W):=\max\{W,2F(W)\}. \]
\end{lemma}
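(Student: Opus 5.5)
Because $\mathcal A_X$ and $\mathcal A_Z$ span $V_X$ and $V_Z$ and have disjoint-from-each-other roles, the code splits along the connected components of $\Gamma$. The plan is to turn that splitting into a direct-sum decomposition of $Q_X$, show only one summand can be nonzero, and then read off the size and degree bounds from \Cref{lem:atomic-incidence}.

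\emph{Decomposition.} Let $\Gamma_1,\dots,\Gamma_c$ be the components of $\Gamma$, and let $I_j$ be the set of physical coordinates in $\Gamma_j$. Every coordinate is a vertex of $\Gamma$, so the $I_j$ partition $[n]$. Each atom lies in exactly one component, with support inside the corresponding $I_j$. Set $V_i^{(j)}:=\Span(\mathcal A_i\cap\Gamma_j)$. Since $\mathcal A_i$ spans $V_i$ by \Cref{lem:atomic-incidence}, we get
\[ V_i=\bigoplus_j V_i^{(j)} \]
with disjoint supports. Taking orthogonal complements,
\[ V_Z^\perp=\bigoplus_j \bigl(V_Z^{(j)}\bigr)^{\perp_{I_j}}, \]
where each complement is taken inside $\F_2^{I_j}$. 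Hence
\[ Q_X\cong\bigoplus_j Q_X^{(j)}, \qquad Q_X^{(j)}:=\bigl(V_Z^{(j)}\bigr)^{\perp_{I_j}}/V_X^{(j)}. \]

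\emph{Permutation structure.} Every $\pi\in\PAut$ preserves $V_X$ and $V_Z$, so it permutes the canonical sets $\mathcal A_X$ and $\mathcal A_Z$ and preserves incidence. It therefore acts on $\Gamma$ by sector-preserving automorphisms and permutes the components. As a result, $\pi$ permutes the summands $Q_X^{(j)}$, mapping each onto the summand of the image component.

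\emph{Uniqueness (the main obstacle).} Let $J=\{j:Q_X^{(j)}\neq0\}$; it is nonempty because $k>0$. Suppose $|J|\geq2$ and pick $j\in J$.
\begin{enumerate}[label=(\arabic*)]
\item Consider the subspace $U:=\bigoplus_{l\in J}Q_X^{(l)}$ and pick vectors $u\in Q_X^{(j)}$ and $u'\in Q_X^{(j')}$ with $j'\neq j$, both nonzero.
\item By surjectivity of $A$ onto $\GLtwo{k}$, some $g\in\GLtwo{k}$ maps $u$ to $u+u'$.
\item Every element of $A(\PAut)$ maps a vector supported in a single summand to a vector supported in a single summand, by the block-permutation structure above.
\item The vector $u+u'$ has nonzero projections onto two distinct summands, so $g\notin A(\PAut)$. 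This contradicts surjectivity.
\end{enumerate}
Hence $|J|=1$, and the unique active component $\Gamma_*$ is mapped to an active component by every $\pi$, so it is $\PAut$-invariant.

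\emph{The quotient $H\twoheadrightarrow\GLtwo{k}$.} Since $\Gamma_*$ is invariant, $Q_X=Q_X^{(*)}$, and the action of $\pi$ on $Q_X$ depends only on $\pi|_{I_*}$. The map
\[ \PAut\to H,\qquad \pi\mapsto\pi|_{\Gamma_*}, \]
defines $H$, and the action of $H$ on $Q_X^{(*)}$ is well defined because $V_X^{(*)}$ and $V_Z^{(*)}$ are determined by the atoms in $\Gamma_*$. The surjection $\PAut\to\GLtwo{k}$ therefore factors through $H$, which gives a quotient $H/\ker\cong\GLtwo{k}$.

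\emph{Size and degree bounds.} These follow from \Cref{lem:atomic-incidence}.
\begin{itemize}
\item Each coordinate has at most $F(W)$ neighbouring atoms in each sector, so at most $2F(W)$ in total.
\item Each atom has at most $W$ neighbouring coordinates, since its weight is at most $W$.
\item Every atom contains at least one coordinate, so the number of atoms is at most $2F(W)n$.
\end{itemize}
Together these give $|V(\Gamma)|\leq(1+2F(W))n$ and $\Delta(\Gamma)\leq\max\{W,2F(W)\}$.
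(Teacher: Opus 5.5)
Your proposal is correct and follows essentially the same route as the paper: you split $V_i=\bigoplus_C V_{i,C}$ along components to get $Q_X=\bigoplus_C Q_C$, then use a linear map sending $u$ to $u+u'$ to rule out two nonzero summands. You then factor the quotient action through restriction to $\Gamma_*$, and obtain the size and degree bounds by counting incidences via \Cref{lem:atomic-incidence}. One minor point: the auxiliary subspace $U$ in your step (1) is never used and can be dropped.
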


\begin{proof}
  Every atom has degree at most $W$.
  By \Cref{lem:atomic-incidence}, each coordinate is incident with at most $F(W)$ atoms of either sector and hence has degree at most $2F(W)$.
  The total number of atom vertices is at most the total number of incidences, which is at most $2F(W)n$.
  This proves the size and degree bounds.

  For a connected component $C$ and $i\in\{X,Z\}$, let $I_C$ be its physical-coordinate set and let $V_{i,C}\leq\F_2^{I_C}$ be the span of the sector-$i$ atoms in $C$.
  Since the atoms span $V_i$ and different graph components have disjoint coordinate sets, \[ V_i=\bigoplus_C V_{i,C}. \]
  Taking orthogonal complements component-wise gives
  \[
    Q_X=V_Z^\perp/V_X =\bigoplus_C Q_C, \qquad Q_C:=V_{Z,C}^\perp/V_{X,C},
  \]
  where the orthogonal complement in $Q_C$ is taken inside $\F_2^{I_C}$.

  The atom sets $\mathcal A_X$ and $\mathcal A_Z$ are $\PAut$-invariant, so $\PAut$ acts on the sector-labelled graph $\Gamma$.
  The group $\PAut$ permutes graph components and therefore permutes the family of nonzero subspaces $Q_C$.
  Because the induced action is all of $\GL(Q_X)$, and this full linear group must preserve that direct-sum system, there cannot be two nonzero summands.
  Indeed, if $x\in Q_C$ and $y\in Q_D$ are nonzero with $C\neq D$, extend $x,y$ to a basis of $Q_X$ and take the invertible linear map sending $x$ to $x+y$ while fixing the other basis vectors.
  It does not permute the summands $Q_C$, therefore cannot be an induced action, contradicting surjectivity.
  Hence there is a unique component $\Gamma_*$ with $Q_*=Q_X\neq0$, and uniqueness makes it $\PAut$-invariant.

  Let $H$ be the image of $\PAut\to\Aut(\Gamma_*)$.
  If two elements of $\PAut$ have the same restriction to $\Gamma_*$, they induce the same action on $Q_*=Q_X$; every other component has zero quotient.
  Therefore the quotient action $\PAut\twoheadrightarrow\GLtwo{k}$ factors through a surjection \[ H\twoheadrightarrow\GLtwo{k}. \]
\end{proof}

\subsection{Bounded degree forces large quotient groups into large vertex orbits}

For a positive integer $m$ and a threshold $\Delta$, write $m_{>\Delta}$ for the largest divisor of $m$ all of whose prime factors are greater than $\Delta$.
For a finite group $G$ acting on a set $\Omega$ and $x\in\Omega$, write $G_x:=\{g\in G:g(x)=x\}$ for the stabilizer of $x$ and $Gx:=\{g(x):g\in G\}$ for its orbit.
The orbit-stabilizer theorem gives $|Gx| = [G:G_x]$.

\begin{lemma}[Prime support of a vertex stabilizer]
  \label{lem:bounded-degree-quotient}
  Let $\Gamma$ be a finite connected graph of maximum degree at most $\Delta\geq2$, let $H\leq\Aut(\Gamma)$, and suppose $\phi:H\twoheadrightarrow S$ is surjective.
  Then for every vertex $v$:
  \begin{enumerate}[label=(\roman*)]
    \item
          every prime divisor of $|H_v|$ is at most $\Delta$;
    \item
          the orbit size satisfies \[ |Hv|\geq |S|_{>\Delta}. \]
  \end{enumerate}
\end{lemma}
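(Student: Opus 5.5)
Part (i) is the classical argument that vertex stabilizers of automorphism groups of connected bounded-degree graphs have only small prime divisors. Part (ii) then follows by orbit--stabilizer and the surjection $\phi$.

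For (i), fix a vertex $v$ and suppose a prime $p>\Delta$ divides $|H_v|$. By Cauchy's theorem, pick $g\in H_v$ of order $p$. I will show by induction on graph distance from $v$ that $g$ fixes every vertex. The base case is $g(v)=v$. For the inductive step, suppose $g$ fixes a vertex $u$. Then $\langle g\rangle$ acts on the neighbourhood $N(u)$, which has at most $\Delta<p$ elements. Every orbit of a group of order $p$ has size $1$ or $p$, so all orbits on $N(u)$ are trivial, and $g$ fixes every neighbour of $u$. Since $\Gamma$ is connected, $g$ fixes all vertices. A graph automorphism fixing all vertices is trivial as an element of $\Aut(\Gamma)$; if one worries about multigraphs, note $\Gamma$ here is a simple graph. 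Hence $g=1$, contradicting order $p$. So every prime divisor of $|H_v|$ is at most $\Delta$.

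For (ii), orbit--stabilizer gives $|Hv|=[H:H_v]=|H|/|H_v|$. Write $|H|_{>\Delta}$ for the $\Delta$-large part of $|H|$. By (i), $|H_v|$ is coprime to every prime exceeding $\Delta$. Hence $|H|_{>\Delta}$ divides $|H|/|H_v|$, so $|Hv|\geq |H|_{>\Delta}$. Finally, $\phi$ is surjective, so $|S|$ divides $|H|$, and therefore $|S|_{>\Delta}$ divides $|H|_{>\Delta}$. Combining, $|Hv|\geq|H|_{>\Delta}\geq|S|_{>\Delta}$.

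The only delicate point is the propagation step in (i). It requires the order-$p$ element to act on a set of size less than $p$, which is exactly where $p>\Delta$ is used, together with faithfulness of the action on vertices. Both hold because $H\leq\Aut(\Gamma)$ for a simple connected graph. The hypothesis $\Delta\geq2$ plays no role beyond ensuring the threshold is meaningful.
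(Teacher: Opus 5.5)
Your proof is correct, and it differs from the paper's only in how each part is organized.

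For (i), the paper argues with the whole stabilizer at once. It filters $H_v$ by the pointwise stabilizers $K_i$ of the balls $B_i(v)$. It then embeds each section $K_i/K_{i+1}$ into a product of symmetric groups, each on at most $\Delta$ forward neighbours of a sphere vertex. Hence every prime dividing $|H_v|$ divides some $|K_i/K_{i+1}|$ and is at most $\Delta$. Your argument (Cauchy's theorem, then propagating fixed points outward from $v$) is the single-element version of the same local-to-global idea. It is shorter and needs no bookkeeping about sections. The filtration gives more structure, namely that every composition factor of $H_v$ is a section of a symmetric group of degree at most $\Delta$, but that extra structure is not used anywhere in the paper.

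For (ii), the paper sets $J=\phi(H_v)$ and factors $[H:H_v]=[S:J]\,[\ker\phi:\ker\phi\cap H_v]$. Since $|J|$ has no prime divisor above $\Delta$, the large-prime part of $|S|$ divides $[S:J]$. You skip $J$: you note that $|H|_{>\Delta}$ divides $[H:H_v]$ because $|H_v|$ is coprime to all primes above $\Delta$, and that $|S|_{>\Delta}$ divides $|H|_{>\Delta}$ because $|S|$ divides $|H|$. Your route is a little cleaner and gives the slightly sharper intermediate bound $|Hv|\geq|H|_{>\Delta}$. The paper's factorization only adds that the large primes of $|S|$ already appear in $[S:J]$, which is not needed for the application.
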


\begin{proof}
  Let $B_i(v)$ be the radius-$i$ ball around $v$, and let $K_i$ be the pointwise stabilizer of $B_i(v)$ in $H_v$.
  Thus $K_0=H_v$, and $K_i=1$ once the ball is the whole connected graph.
  If $S_i(v)=B_i(v)\setminus B_{i-1}(v)$, then an element of $K_i$ fixes every $u\in S_i(v)$ and permutes its neighbors in $S_{i+1}(v)$.
  Recording these permutations gives an injection \[ K_i/K_{i+1}\hookrightarrow \prod_{u\in S_i(v)} \operatorname{Sym}\bigl(\mathcal N(u)\cap S_{i+1}(v)\bigr). \]
  Every symmetric group on the right has degree at most $\Delta$, so every prime divisor of every section $K_i/K_{i+1}$ is at most $\Delta$.
  The same is therefore true of $H_v$.

  Put $J=\phi(H_v)$.
  It is a quotient of $H_v$, so $|J|$ has no prime divisor greater than $\Delta$.
  Since $\phi$ is surjective, \[ |Hv|=[H:H_v] =[S:J]\,[\ker\phi:\ker\phi\cap H_v]. \]
  For each prime $p>\Delta$, the full $p$-primary part of $|S|$ divides $[S:J]$, and hence divides $|Hv|$.
  Taking the product over all such primes gives $|Hv|\geq |S|_{>\Delta}$.
\end{proof}

It remains to estimate the large-prime part of $|\GLtwo{k}|$.
For a prime $p$ and an integer $a$ with $p\nmid a$, let $\operatorname{ord}_p(a)$ denote the least positive integer $r$ such that $a^r\equiv1\pmod p$.
For a nonzero integer $m$, let $v_p(m):=\max\{e\geq0:p^e\mid m\}$ denote its $p$-adic valuation.

\begin{lemma}[Large-prime part of $\GLtwo{k}$]
  \label{lem:large-prime-part}
  For every fixed $\Delta\geq2$ there is a constant $C_\Delta$ such that \[ |\GLtwo{k}|_{>\Delta} \geq 2^{k(k-1)/2-C_\Delta k} \] for all $k\geq1$.
\end{lemma}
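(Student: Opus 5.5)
We use the order formula $|\GLtwo{k}|=2^{k(k-1)/2}\prod_{i=1}^k(2^i-1)$. Since $|\GLtwo{k}|_{>\Delta}$ is the odd part of $|\GLtwo{k}|$ with all primes $p\leq\Delta$ removed, it equals
\[
  \prod_{i=1}^k(2^i-1)\Big/\prod_{3\leq p\leq\Delta}p^{\,e_p},\qquad e_p:=v_p\Bigl(\prod_{i=1}^k(2^i-1)\Bigr).
\]
The numerator is at least $\prod_{i=1}^k 2^{i-1}=2^{k(k-1)/2}$. Indeed $2^i-1\geq 2^{i-1}$ for every $i\geq1$. So it suffices to show that the small-prime contribution $\prod_{3\leq p\leq\Delta}p^{e_p}$ is at most $2^{C_\Delta k}$.

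Fix an odd prime $p\leq\Delta$ and set $r=\operatorname{ord}_p(2)$. Then $p\mid 2^i-1$ exactly when $r\mid i$. By the lifting-the-exponent lemma, for $r\mid i$ we have
\[
  v_p(2^i-1)=v_p(2^r-1)+v_p(i/r).
\]
Summing over $i=rj\leq k$ gives
\[
  e_p=\lfloor k/r\rfloor\, v_p(2^r-1)+v_p\bigl(\lfloor k/r\rfloor!\bigr).
\]
The first term is at most $k\cdot v_p(2^r-1)$, where $v_p(2^r-1)\leq \log_p(2^{p-1})$ is a constant depending only on $p$. By Legendre's formula the second term is at most $k/(p-1)\leq k$. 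Hence $e_p\leq c_p k$ with $c_p$ depending only on $p$, and so $p^{e_p}\leq 2^{c_p k\log_2 p}$.

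Taking the product over the finitely many odd primes $p\leq\Delta$ yields $C_\Delta:=\sum_{3\leq p\leq\Delta}c_p\log_2 p$, which completes the bound.

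The only nontrivial ingredient is the valuation estimate $e_p=O_p(k)$. The lifting-the-exponent lemma handles it cleanly. If one wants to avoid that lemma, a cruder argument also works: $v_p(2^{rj}-1)\leq v_p(2^r-1)+v_p(j)$ follows from the factorisation $2^{rj}-1=(2^r-1)\bigl(1+2^r+\cdots+2^{r(j-1)}\bigr)$ by induction on the $p$-part of $j$. That inequality is all we need, since only an upper bound is required.

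Everything else is bookkeeping. The case $k$ small is absorbed by the constant, since the exponent bound is linear in $k$ with no threshold required.
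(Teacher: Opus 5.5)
Your proof is correct and follows essentially the same route as the paper. You use the order formula with $\prod_{i=1}^k(2^i-1)\geq 2^{k(k-1)/2}$, and then lifting-the-exponent to get $v_p\bigl(\prod_{i}(2^i-1)\bigr)=\lfloor k/r\rfloor\, v_p(2^r-1)+v_p(\lfloor k/r\rfloor!)=O_p(k)$ for each of the finitely many odd primes $p\leq\Delta$. The only difference is that you make the constants explicit, via $r\leq p-1$ and Legendre's formula, where the paper simply writes $O_p(k)$.
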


\begin{proof}
  The order formula can be written as \[ |\GLtwo{k}| =2^{k(k-1)/2}\prod_{i=1}^k(2^i-1). \]
  Set \[ P_k:=\prod_{i=1}^k(2^i-1). \]
  Since $2^i-1\geq2^{i-1}$, \[ P_k\geq2^{k(k-1)/2}. \]
  We show that the part of the odd number $P_k$ supported on the finitely many odd primes $p\leq\Delta$ is only $2^{O_\Delta(k)}$.

  Fix an odd prime $p\leq\Delta$ and let $r=\operatorname{ord}_p(2)$.
  A factor $2^i-1$ is divisible by $p$ exactly when $r\mid i$.
  The standard lifting-the-exponent calculation gives \[ v_p(2^{rm}-1)=v_p(2^r-1)+v_p(m) \qquad(m\geq1). \]
  With $M=\lfloor k/r\rfloor$, \[ v_p(P_k) =Mv_p(2^r-1)+v_p(M!) =O_p(k). \]
  Summing over the finitely many odd primes $p\leq\Delta$ shows that the base-two logarithm of the $\leq\Delta$ prime part of $P_k$ is $O_\Delta(k)$.
  Removing it leaves a divisor supported only on primes greater than $\Delta$ of size at least \[ 2^{k(k-1)/2-C_\Delta k}. \]
  This divisor also divides $|\GLtwo{k}|$, proving the claim.
\end{proof}

\subsection{The CSS form of the no-go theorem}\label{sec:css-no-go}

We can now combine the three ingredients.

\begin{lemma}[CSS length lowerbound]
  \label{lem:css-n-lowerbound}
  For every integer $W\geq1$ there is a constant $C_W$ such that the following holds.
  If
  \[
    w(V_X,V_Z)\leq W, \qquad A(\PAut(V_X,V_Z))=\GLtwo{k}, \qquad d>W,
  \]
  then
  \begin{equation}
    \label{eq:quant-nogo}
    (1+2W! W^W)n \geq 2^{k(k-1)/2-C_Wk}.
  \end{equation}
  In particular, \[ \log n=\Omega_W(k^2). \]
\end{lemma}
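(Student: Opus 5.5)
The plan is to chain the three ingredients just established: the atomic incidence graph (\Cref{lem:atomic-incidence,lem:active-component}), the vertex-stabilizer lemma (\Cref{lem:bounded-degree-quotient}), and the prime-part estimate (\Cref{lem:large-prime-part}).

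First, dispose of $k=0$. There the right-hand side of \eqref{eq:quant-nogo} equals $1$ and the inequality is trivial, so assume $k\geq1$. Since $d>W$ means $d_X,d_Z>W$, \Cref{lem:atomic-incidence} applies and the atom sets are canonical $\PAut$-invariant spanning sets with incidence bound $F(W)=W!W^W$. Then \Cref{lem:active-component} produces a $\PAut$-invariant connected component $\Gamma_*$ with the following properties: the restriction image $H\leq\Aut(\Gamma_*)$ surjects onto $\GLtwo{k}$, $\Gamma_*$ has maximum degree at most $\Delta(W)=\max\{W,2F(W)\}$, and $|V(\Gamma_*)|\leq|V(\Gamma)|\leq(1+2F(W))n$. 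If $\Delta(W)<2$, replace it by $2$ so that \Cref{lem:bounded-degree-quotient} applies; raising the degree bound is harmless.

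Next, fix any vertex $v$ of $\Gamma_*$. The component is nonempty because its quotient is nonzero. \Cref{lem:bounded-degree-quotient}(ii), applied with $S=\GLtwo{k}$, gives
\[
  |Hv|\geq|\GLtwo{k}|_{>\Delta(W)}.
\]
The orbit $Hv$ lies inside $V(\Gamma_*)$, so $(1+2F(W))n\geq|\GLtwo{k}|_{>\Delta(W)}$. \Cref{lem:large-prime-part} with $\Delta=\Delta(W)$ bounds the right-hand side below by $2^{k(k-1)/2-C_{\Delta(W)}k}$. Setting $C_W:=C_{\Delta(W)}$ gives \eqref{eq:quant-nogo}. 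One subtlety is that $H$ acts on a sector-labelled graph, but $H\leq\Aut(\Gamma_*)$ of the underlying graph is all that the lemma needs.

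Finally, take logarithms to get $\log n\geq k(k-1)/2-C_Wk-\log(1+2F(W))$, which is $\Omega_W(k^2)$ for $k$ large. For the finitely many small $k$, where this bound could be nonpositive, the asymptotic statement holds trivially: for $k\geq1$ we have $n\geq k$, so $\log n$ is bounded below for $n\ge2$, and we can absorb these cases into the constant. I expect no real obstacle, since all the heavy lifting is done by the earlier lemmas. The only care needed is in checking the hypotheses: nonemptiness of the component, $\Delta\geq2$, and the fact that $d>W$ yields both sector distances exceeding $W$.
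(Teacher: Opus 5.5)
Your proposal is correct and follows the paper's proof essentially verbatim: \Cref{lem:atomic-incidence,lem:active-component} give the bounded-degree active component $\Gamma_*$ with $H\twoheadrightarrow\GLtwo{k}$, and \Cref{lem:bounded-degree-quotient,lem:large-prime-part} then give $(1+2F(W))n\geq|V(\Gamma_*)|\geq|Hv|\geq|\GLtwo{k}|_{>\Delta(W)}$. Your extra checks are sound additions the paper leaves implicit (the case $k=0$, which \Cref{lem:active-component} excludes, and the nonemptiness of $\Gamma_*$), while the worry about $\Delta(W)<2$ is vacuous because $2F(W)=2W!W^W\geq2$.
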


\begin{proof}
  Because $d=\min\{d_X,d_Z\}>W$, \Cref{lem:atomic-incidence} applies.
  By \Cref{lem:active-component}, there is a connected graph component $\Gamma_*$ such that \[ |V(\Gamma_*)|\leq(1+2W! W^W)n, \qquad \Delta(\Gamma_*)\leq\Delta(W), \] and a subgroup $H\leq\Aut(\Gamma_*)$ admitting a surjection \[ H\twoheadrightarrow\GLtwo{k}. \]
  Apply \Cref{lem:bounded-degree-quotient} to any vertex $v\in V(\Gamma_*)$ and then \Cref{lem:large-prime-part} with $\Delta=\Delta(W)$.
  Since the orbit $Hv$ is contained in $V(\Gamma_*)$, \[ (1+2W! W^W)n \geq |V(\Gamma_*)| \geq |Hv| \geq |\GLtwo{k}|_{>\Delta(W)} \geq 2^{k(k-1)/2-C_{\Delta(W)}k}. \]
  Renaming the last constant as $C_W$ gives \eqref{eq:quant-nogo}.
\end{proof}

The CSS argument already gives the promised distance ceiling on this subclass.

\begin{theorem}[CSS distance no-go]
  \label{thm:css-no-go}
  Let \[ \{(V_X^{(t)},V_Z^{(t)})\}_{t\geq1} \] be an unbounded family of binary CSS pairs with lengths $n_t$, logical dimensions $k_t$, generator weights $w_t$, and distances $d_t$.
  Suppose \[ A(\PAut(V_X^{(t)},V_Z^{(t)}))=\GLtwo{k_t}, \qquad k_t=\omega(\sqrt{\log n_t}), \qquad w_t=O(1). \]
  Then, for all sufficiently large $t$,
  \[
    d_t\leq w_t.
  \]
  In particular, every bounded-check-weight phantom family with $k=\omega(\sqrt{\log n})$ has bounded distance.
\end{theorem}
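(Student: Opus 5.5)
The plan is to argue by contradiction using \Cref{lem:css-n-lowerbound}. Since $w_t=O(1)$, fix an integer $W\geq1$ with $w_t\leq W$ for all $t$. Suppose the conclusion fails. Then there is an infinite set $T$ of indices with $d_t>w_t$. We want a contradiction from $d_t>W$ along an infinite subsequence. The catch is that $d_t>w_t$ does not immediately give $d_t>W$.

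To handle this, partition $T$ according to the value of $w_t\in\{1,\ldots,W\}$. Since there are finitely many classes, some value $W'$ occurs for infinitely many $t\in T$. Along this subsequence $w(V_X^{(t)},V_Z^{(t)})=W'$ and $d_t>W'$. Moreover, the family is unbounded and $k_t=\omega(\sqrt{\log n_t})$, so we may take $n_t\to\infty$ along the subsequence. Passing to a further subsequence if needed, $k_t\to\infty$ as well; this holds because $k_t=\omega(\sqrt{\log n_t})$ forces $k_t\to\infty$ whenever $n_t\to\infty$. (If $k_t=0$ the phantom condition is vacuous, but $k_t\to\infty$ rules this out eventually.)

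Now apply \Cref{lem:css-n-lowerbound} with parameter $W'$ to each code in the subsequence. All its hypotheses hold: the check weight is at most $W'$, the quotient action is surjective, and $d_t>W'$. This gives
\[
  \log_2 n_t+\log_2(1+2W'!\,W'^{W'})\geq \frac{k_t(k_t-1)}{2}-C_{W'}k_t .
\]
Consequently $\log n_t\geq c\,k_t^2$ for some $c>0$ and all large $t$ in the subsequence, i.e.\ $k_t=O(\sqrt{\log n_t})$. This contradicts $k_t=\omega(\sqrt{\log n_t})$. Hence $d_t\leq w_t$ for all sufficiently large $t$. The final sentence of the theorem follows immediately, since $d_t\leq w_t=O(1)$.

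The only delicate step is the reduction from the hypothesis $d_t>w_t$ to the fixed-threshold hypothesis $d>W'$ needed by the lemma, together with making the asymptotic quantifiers precise. Pigeonholing on the finitely many possible values of $w_t$ resolves it. The rest is a direct substitution into \eqref{eq:quant-nogo}.
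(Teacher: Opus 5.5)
Your argument follows the paper's proof. You fix a uniform bound $W$, pigeonhole the infinitely many bad indices onto a constant value of $w_t$, apply \Cref{lem:css-n-lowerbound} at that value, and contradict $\log n_t=o(k_t^2)$. There is one unjustified step: you pigeonhole over $w_t\in\{1,\ldots,W\}$ and silently discard $w_t=0$. That value is not vacuous. Since $w(V)=0$ only for $V=0$, it occurs exactly when $V_X^{(t)}=V_Z^{(t)}=0$. Such a code has $d_t=1>0=w_t$, so these indices can lie in your bad set $T$. Moreover, \Cref{lem:css-n-lowerbound} is stated only for $W\geq1$, so you cannot simply apply it with $W'=0$.

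The case is easy to close, but the reason has to be phantomness rather than the growth hypothesis. Here $k_t=n_t$, which is perfectly consistent with $k_t=\omega(\sqrt{\log n_t})$. The paper does treat $W_0=0$ as a separate case, but it cites the growth hypothesis, which by itself gives no contradiction. The correct argument runs as follows. For the trivial pair, $Q_X=\F_2^{n_t}$ and $\PAut=S_{n_t}$ acts by permutation matrices, all of which fix the all-ones vector. By contrast, $\GLtwo{n_t}$ acts transitively on nonzero vectors and so moves the all-ones vector once $n_t\geq2$. Hence the trivial code is not phantom for $n_t\geq2$, and since $n_t\to\infty$ this case occurs only finitely often. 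With this added, your proof is complete.
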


\begin{proof}
  Choose $W$ with $w_t\leq W$ for all sufficiently large $t$.
  Since the family is unbounded and $k_t=\omega(\sqrt{\log n_t})$, we have $k_t\to\infty$.

  Suppose, for contradiction, that $d_t>w_t$ for infinitely many $t$.
  The integer $w_t$ takes values in the finite set $\{0,1,\ldots,W\}$, so after passing to an infinite subsequence it is constant, say $w_t=W_0$.
  The case $W_0=0$ cannot occur on an unbounded subsequence: then $V_X^{(t)}=V_Z^{(t)}=0$, so $k_t=n_t$, contradicting $k_t=\omega(\sqrt{\log n_t})$.
  Thus $W_0\geq1$.

  Along this subsequence, $d_t>W_0$.
  By \Cref{lem:css-n-lowerbound}, \[ \log_2 n_t \geq \frac{k_t(k_t-1)}2-C_{W_0}k_t-O_{W_0}(1) =\Omega_{W_0}(k_t^2). \]
  But $k_t=\omega(\sqrt{\log n_t})$ also gives $\log n_t=o(k_t^2)$, a contradiction as $k_t\to\infty$.
  Hence only finitely many members can satisfy $d_t>w_t$.
\end{proof}

\begin{remark}[Where logical scaling enters]
  The local combinatorial part of the proof does not use an asymptotic hypothesis on $k$.
  The stronger finite-length statement is \Cref{lem:css-n-lowerbound}: whenever $d>W$, one must pay $\log n=\Omega_W(k^2)$.
  The hypothesis $k=\omega(\sqrt{\log n})$ is used only to contradict this lower bound.
\end{remark}

\begin{remark}[Relation to qLDPC]
  The theorem assumes only a bounded-weight basis for each stabilizer space.
  It therefore applies a fortiori to standard qLDPC families, which additionally require bounded qubit degree.
  No bounded-degree hypothesis is needed for the no-go.
  In particular, the qLDPC families of \Cref{cor:qldpc-steane} and \Cref{thm:qldpc-fixedD} lie above the square-root-logarithmic threshold and necessarily remain at constant distance, exactly as their construction does.
\end{remark}

\subsection{General stabilizer codes}\label{sec:stabilizer-no-go}

We now prove the obstruction in the full setting of \Cref{def:phantom-stabilizer}.
The preceding CSS argument remains in the paper because its unique-active-component step is particularly easy to understand: it uses only the fact that the natural $\GLtwo{k}$-module has no invariant direct-sum system.
For an arbitrary stabilizer code, the corresponding step uses the symplectic logical module $V\oplus V^*$ and the following observation.

\begin{lemma}[Invariant logical subspaces]
  \label{lem:stabilizer-invariant-subspaces}
  Let $k=\dim V\geq3$ and let $\GL(V)$ act on $L=V\oplus V^*$ through $\rho$ (see \Cref{def:phantom-stabilizer}).
  The only invariant subspaces of $L$ are
  \[
    0,\qquad V\oplus0,\qquad 0\oplus V^*,\qquad L.
  \]
  The middle two are Lagrangian and, in particular, degenerate under the restricted symplectic form.
\end{lemma}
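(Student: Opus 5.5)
Let $U\leq L$ be a $\rho(\GL(V))$-invariant subspace. The plan is to reduce to the two projections and the two intersections with the coordinate summands, using the fact that $\GL(V)$ acts irreducibly on both $V$ and $V^*$.

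\textbf{Step 1: irreducibility of $V$ and $V^*$.} The natural module $V$ is irreducible, since $\GL(V)$ is transitive on nonzero vectors. The dual $V^*$ is also irreducible, because $f\mapsto f\circ F^{-1}$ is transitive on nonzero functionals. Moreover, $V$ and $V^*$ are non-isomorphic as $\GL(V)$-modules when $k\geq3$. To see this, note that an isomorphism $V\to V^*$ would give a nondegenerate $\GL(V)$-invariant bilinear form on $V$. Such a form cannot exist: its stabilizer in $\GL(V)$ is a symplectic or orthogonal group, which is proper when $k\geq 3$. Alternatively, one can exhibit a transvection whose fixed-space dimensions on $V$ and $V^*$ are compatible but whose actions are not conjugate via a linear isomorphism. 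Concretely, the element $F=-I$ is unavailable over $\F_2$, so I would instead argue directly via invariant forms.

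\textbf{Step 2: intersections and projections.} Let $U_1=U\cap(V\oplus0)$ and let $\pi_1(U)$ be the projection of $U$ onto $V$; define $U_2$ and $\pi_2(U)$ similarly for $V^*$. All four are invariant, so by irreducibility each is $0$ or the whole summand.
\begin{itemize}
\item If $U_1=V$, then $U=V\oplus\pi_2(U)$, and $U$ is one of the listed spaces. The case $U_2=V^*$ is symmetric.
\item If $U_1=U_2=0$ but $U\neq0$, then both projections are injective. If both were surjective, $U$ would be the graph of a $\GL(V)$-equivariant isomorphism $V\to V^*$, which Step 1 excludes. If one projection is not surjective, it is $0$ by irreducibility, so $U$ lies in one summand and hence equals $U_1$ or $U_2$, contradicting $U_1=U_2=0$.
\end{itemize}
So the invariant subspaces are exactly the four listed.

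\textbf{Step 3: the Lagrangian claim.} For $V\oplus0$, we have $\omega((x,0),(y,0))=0$, so it is isotropic of dimension $k=\tfrac12\dim L$, hence Lagrangian. The same holds for $0\oplus V^*$. An isotropic nonzero subspace has zero restricted form, so it is degenerate.

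\textbf{Main obstacle.} The step I expect to be hardest is the non-isomorphism $V\not\cong V^*$ for $k\geq3$. I would use this argument: an equivariant isomorphism gives a nondegenerate form $B$ with $B(Fx,Fy)=B(x,y)$ for all $F$. Since $\GL(V)$ is doubly transitive on pairs of independent vectors, $B(x,y)$ takes a constant value $c$ on all independent pairs. Take independent $x,y$; when $k\geq3$, pick $z$ outside $\Span\{x,y\}$. Then $x$ and $y+z$ are independent, as are $x$ and $z$, so
\[
c=B(x,y+z)=B(x,y)+B(x,z)=2c=0.
\]
Hence $B(x,y)=0$ for every independent pair. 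Given $x\neq0$, write $x=y+(x+y)$ with $y$ and $x+y$ each independent of $x$; then $B(x,x)=B(x,y)+B(x,x+y)=0$. So $B\equiv0$, which contradicts nondegeneracy. This clean argument is why the hypothesis $k\geq3$ is needed; for $k=2$ the module $V\cong V^*$ and the lemma fails.
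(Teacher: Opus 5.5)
Your proof is correct and follows essentially the same route as the paper. Both arguments use irreducibility of $V$ and $V^*$, then show there is no equivariant isomorphism because the invariant form $B(x,y)=T(x)(y)$ must vanish when $k\ge 3$. You get this from transitivity on independent pairs and a third vector $z$, while the paper uses coordinate permutations and a transvection. Both then finish with the same projection/intersection case analysis and note that the two summands are isotropic of dimension $k$. The hedging in your Step 1 (isometry groups being "symplectic or orthogonal", the remark about $F=-I$) is imprecise and unnecessary, but your final direct argument replaces it and is complete.
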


\begin{proof}
  The natural modules $V$ and $V^*$ are irreducible: the orbit of any nonzero vector spans the whole module.
  They are not isomorphic as $\GL(V)$-modules when $k\geq3$.
  Indeed, an equivariant isomorphism $T:V\to V^*$ would define a nonzero bilinear form $B(x,y):=T(x)(y)$ satisfying
  \[
    B(Fx,Fy)=T(Fx)(Fy)=\bigl(T(x)\circ F^{-1}\bigr)(Fy)=B(x,y), \qquad \text{for } F\in\GL(V).
  \]
  In a basis $\{e_1,\dots,e_k\}\subset V$, invariance under coordinate permutations would force
  \[
    B(e_i,e_i)=B(e_1,e_1)=:a, \qquad B(e_i,e_j)=B(e_1,e_2)=:b, \qquad \text{for all $i$ and $j\neq i$}.
  \]
  For the transvection $\tau(e_1)=e_1+e_2$ fixing the other basis vectors, invariance gives
  \[
    a=B(e_1,e_1)=B(e_1+e_2,e_1+e_2)=a+b+b+a=0
  \]
  and, since $k\geq3$,
  \[
    b=B(e_1,e_3)=B(e_1+e_2,e_3)=b+b=0.
  \]
  Thus $B=0$, contradicting that $T$ is an isomorphism, and hence $V\not\cong V^*$ as $\GL(V)$-modules.

  Now let $U\leq V\oplus V^*$ be invariant, and denote the coordinate projections by $\pi_V$ and $\pi_{V^*}$.
  By irreducibility, each image is either zero or the entire corresponding module.
  If one image is zero, irreducibility gives one of $0$, $V\oplus0$, or $0\oplus V^*$.
  Suppose instead that both images are surjective.
  The projection kernels are invariant and satisfy
  \[
    \ker(\pi_V|_U)=U\cap(0\oplus V^*)\in\{0,0\oplus V^*\}, \qquad \ker(\pi_{V^*}|_U)=U\cap(V\oplus0)\in\{0,V\oplus0\}.
  \]
  If either kernel is nonzero, $U$ contains the corresponding summand, and surjectivity of the other projection then gives $U=L$.
  If both kernels vanish, both projections are isomorphisms and
  \[
    T:=\pi_{V^*}|_U\circ(\pi_V|_U)^{-1}:V\longrightarrow V^*
  \]
  is a $\GL(V)$-equivariant isomorphism, contradicting $V\not\cong V^*$.
  This proves the list.

  Finally, for $x,y\in V$ and $f,g\in V^*$, \cref{eq:canonical-logical-symplectic-form} gives
  \[
    \omega((x,0),(y,0))=0, \qquad \omega((0,f),(0,g))=0.
  \]
  Thus the restricted form on either middle subspace is identically zero.
  Each has dimension $k=\frac12\dim L$, so both are maximal isotropic, equivalently Lagrangian.
\end{proof}

General Pauli words with the same support can have different $X$, $Y$, and $Z$ labels.
The sunflower step must therefore remember the labels on the common core.

\begin{definition}[Stabilizer atom]
  A nonzero $a\in S$ of weight $\wt_q(a) \le W$ is a \emph{stabilizer atom} if it cannot be written as $a=b+c$ with nonzero $b,c\in S$ such that $\supp_q(b) \cap \supp_q(c) = \varnothing$.
  Let $\mathcal A(S,W)$ denote the set of stabilizer atoms of weight at most $W$.
\end{definition}

\begin{lemma}[Stabilizer atomic incidence bound]
  \label{lem:stabilizer-atomic-incidence}
  Suppose $w(S)\leq W$ and $d(S)>W$.
  Then $\mathcal A(S,W)$ is a canonical $\PAut(S)$-invariant spanning set of $S$, and every physical coordinate belongs to the supports of at most
  \begin{equation}
    \label{eq:stabilizer-sunflower-constant}
    F_{\mathrm{stab}}(W) :=3^W W!\bigl((W+1)3^W-1\bigr)^W.
  \end{equation}
  atoms from $\mathcal A(S,W)$.
\end{lemma}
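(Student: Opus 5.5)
I will follow the proof of \Cref{lem:atomic-incidence}, with two changes: the sunflower is taken over labelled supports, and the orthogonality bookkeeping uses the symplectic form.

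\emph{Spanning and invariance.} Take a basis of $S$ with $\wt_q\leq W$. Any non-atomic element splits as $b+c$ with disjoint nonempty qubit supports, and each piece has strictly smaller support. Iterating therefore terminates with a sum of atoms of weight at most $W$, so $\mathcal A(S,W)$ spans $S$. The definition uses only $S$ and qubit supports. Since a permutation $\pi\in\PAut(S)$ acts diagonally and preserves supports, it permutes $\mathcal A(S,W)$.

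\emph{Incidence bound.} Fix a coordinate $q$. I would encode each atom $a$ as the labelled set
\[
\widehat a:=\{(i,\ell_i):i\in\supp_q(a)\}\subseteq[n]\times\{X,Y,Z\},
\]
where $\ell_i$ is the Pauli label of $a$ at $i$. This set determines $a$, so distinct atoms give distinct labelled sets of size at most $W$.

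The stated constant $3^W W!\bigl((W+1)3^W-1\bigr)^W$ suggests a two-stage count. First, more than $F_{\mathrm{stab}}(W)$ atoms contain $q$. I would pigeonhole these atoms into classes that share the same labels on enough structure that a labelled sunflower has a \emph{consistently labelled core}; the factor $3^W$ accounts for this. Then I would apply \Cref{lem:sunflower} with $s=W$ and $r=(W+1)3^W$ to the supports. Among the $r$ petals, I would again pigeonhole on the label pattern on the core, which costs another factor $3^{W}$, to extract $W+1$ atoms
\[
a_1,\ldots,a_{W+1}
\]
that have identical Pauli restriction $c$ on a common core $C\ni q$ and pairwise disjoint petals. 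Let $p_j$ denote the restriction of $a_j$ to its petal. Then $a_j=c+p_j$ as vectors in $\cP([n])$, with disjoint supports.

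Next take any generator $g$ from a basis of $S$ of weight at most $W$. The support of $g$ cannot meet all $W+1$ disjoint petals, since the petals are distinct and at most one of them is empty. If $g$ misses the petal of $a_j$, then
\[
0=\langle g,a_j\rangle_s=\langle g,c\rangle_s .
\]
Hence $\langle g,p_b\rangle_s=\langle g,a_b\rangle_s+\langle g,c\rangle_s=0$ for every $b$. So $c$ and every $p_b$ lie in $S^{\perp_s}$.

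Each of these has weight at most $W<d(S)$, so each lies in $S$. The core $c$ is nonzero because $q\in C$, and some petal $p_b$ is nonzero because the atoms are distinct. Then $a_b=c+p_b$ is a disjoint splitting of $a_b$, contradicting atomicity.

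The main obstacle is the pigeonhole and sunflower arithmetic needed to match the exact constant: choosing the order of label-pigeonholing and sunflower extraction so that the counts multiply to $F_{\mathrm{stab}}(W)$. If that exact bookkeeping fails, any bound depending only on $W$ suffices for the downstream argument.
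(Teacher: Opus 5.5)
\textbf{Gap: the sunflower lemma is applied to supports that need not be distinct.}
Your spanning and invariance argument, and the final contradiction, match the paper's proof. That contradiction is correct: isotropy against a weight-$W$ basis puts $c$ and every $p_b$ in $S^{\perp_s}$, and $d(S)>W$ then forces them into $S$. The problem is the counting step before it. \Cref{lem:sunflower} needs a family of \emph{distinct} sets, but distinct atoms can share a qubit support and differ only in their Pauli labels. So ``apply the sunflower lemma to the supports'' is not justified. Neither is your later use of the petals being distinct with at most one empty.

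\textbf{The missing observation.}
The first factor $3^W$ in $F_{\mathrm{stab}}(W)$ is not about a consistently labelled core; your second pigeonhole already handles that. It is a multiplicity bound: a set $T$ with $|T|\leq W$ is the support of at most $3^{|T|}\leq 3^W$ Pauli words. Hence more than $F_{\mathrm{stab}}(W)$ atoms through $q$ have more than $W!\bigl((W+1)3^W-1\bigr)^W$ distinct supports. That is exactly the threshold of \Cref{lem:sunflower} with $s=W$ and $r=(W+1)3^W$. Choose one atom for each support in the resulting sunflower, then pigeonhole over the at most $3^{|C|}\leq 3^W$ core labellings. This gives your $W+1$ atoms with distinct supports. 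The rest of your argument then yields the stated constant exactly, so no fallback to ``any bound depending on $W$'' is needed. This is the paper's proof.

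\textbf{If you meant to use the labelled sets $\widehat a$.}
These sets are distinct, but the petals from \Cref{lem:sunflower} are then disjoint only as subsets of $[n]\times\{X,Y,Z\}$. One petal may contain $(i,X)$ and another $(i,Z)$, so their qubit supports overlap, and a weight-$W$ generator may meet all $W+1$ of them. This version can be repaired:
\begin{enumerate}[label=(\roman*)]
  \item each qubit lies in the supports of at most three labelled-disjoint petals, so $r=3W+1$ petals guarantee that $g$ misses one;
  \item $r\geq 4$ forces some $(q,\ell)$ into the core, so $c\neq 0$;
  \item the core labelling is automatically common, so no second pigeonhole is needed.
\end{enumerate}
This even gives the smaller bound $W!(3W)^W\leq F_{\mathrm{stab}}(W)$. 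As written, however, your proposal mixes the two encodings and completes neither count.
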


\begin{proof}
  The spanning, canonicity, and $\PAut(S)$-invariance assertions follow from the recursive-splitting argument, just as in the proof of \Cref{lem:atomic-incidence}.

  Fix a coordinate $q$.
  A fixed support of size at most $W$ carries at most $3^W$ nonzero Pauli labellings.
  If more than $F_{\mathrm{stab}}(W)$ atoms contain $q$, their supports therefore include more than
  \[
    W!\bigl((W+1)3^W-1\bigr)^W
  \]
  distinct sets.
  By \Cref{lem:sunflower}, these contain a sunflower of $(W+1)3^W$ supports
  \[
    C\sqcup P_j,
  \]
  where $q\in C$ and the petals are pairwise disjoint.
  There are at most $3^{|C|}\leq3^W$ possible Pauli restrictions to $C$.
  We may consequently choose $W+1$ atoms $a_j$ with the same nonzero core restriction $c$ and write
  \[
    a_j=c+p_j, \qquad \supp_q(p_j)=P_j.
  \]

  From this point, the sunflower contradiction proceeds exactly as in the proof of \Cref{lem:atomic-incidence}: a weight-$W$ basis of $S$ and isotropy imply that $c,p_i\in S^{\perp_s}$, while $W<d(S)$ forces every nonzero such word into $S$, contradicting atomicity for any nonempty petal.
\end{proof}

Form the bipartite incidence graph $\Gamma(S,W)$ whose vertices are the physical coordinates and the atoms in $\mathcal A(S,W)$, with adjacency given by support incidence.

\begin{lemma}[Active stabilizer component]
  \label{lem:active-stabilizer-component}
  Let $S$ be a phantom stabilizer code with $k\geq3$, $w(S)\leq W$, and $d(S)>W$.
  Precisely one connected component $\Gamma_*$ of $\Gamma(S,W)$ has nonzero logical Pauli space.
  It is invariant under the subgroup of $\PAut(S)$ realizing the standard logical $\GLtwo{k}$, and the image $H\leq\Aut(\Gamma_*)$ of the restriction map has a quotient isomorphic to $\GLtwo{k}$.
  Moreover,
  \[
    |V(\Gamma)|\leq(1+F_{\mathrm{stab}}(W))n, \qquad \Delta(\Gamma)\leq \Delta_{\mathrm{stab}}(W):=\max\{W,F_{\mathrm{stab}}(W)\}.
  \]
\end{lemma}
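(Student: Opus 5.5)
The plan is to follow the CSS proof of \Cref{lem:active-component} closely, replacing the quotient $Q_X$ by the full logical Pauli space $L=S^{\perp_s}/S$ and the direct-sum argument by \Cref{lem:stabilizer-invariant-subspaces}.

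\emph{Size and degree bounds.} These are immediate. Every atom has weight at most $W$, so its degree in $\Gamma(S,W)$ is at most $W$. By \Cref{lem:stabilizer-atomic-incidence}, each coordinate has degree at most $F_{\mathrm{stab}}(W)$. Counting incidences, the number of atom vertices is at most $F_{\mathrm{stab}}(W)n$, which gives $|V(\Gamma)|\leq(1+F_{\mathrm{stab}}(W))n$.

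\emph{Component decomposition.} For each connected component $C$ with coordinate set $I_C$, let $S_C\leq\cP(I_C)$ be the span of the atoms lying in $C$. Atoms span $S$, and distinct components have disjoint coordinate sets, so $S=\bigoplus_C S_C$. The symplectic form on $\cP([n])$ is the orthogonal sum of the forms on the $\cP(I_C)$, so $S^{\perp_s}=\bigoplus_C S_C^{\perp_s}$, with each complement taken inside $\cP(I_C)$. Hence
\[
L=\bigoplus_C L_C,\qquad L_C:=S_C^{\perp_s}/S_C .
\]
This sum is symplectically orthogonal, and each nonzero $L_C$ is nondegenerate.

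\emph{Uniqueness of the active component.} Let $G\leq\PAut(S)$ be the subgroup whose image on $L$ equals $\rho(\GL(V))$ under the phantom identification. Atoms are canonical, so $G$ acts on $\Gamma(S,W)$ and permutes the summands $L_C$.

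Suppose two components $C\neq D$ have $L_C,L_D\neq0$. Consider the finitely many summands $L_E\neq0$ and let $K\leq G$ be the kernel of the permutation action on them. Then $K$ has finite index, and $\rho^{-1}$ of its image is a finite-index subgroup of $\GL(V)$ preserving each nonzero $L_E$.

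Here I would like to say directly that $\GL(V)$ itself preserves each $L_E$, and I expect this to be the main obstacle. A cleaner route is to pass to block sums. For each $G$-orbit $\mathcal O$ of active components, the block $\bigoplus_{E\in\mathcal O}L_E$ is $G$-invariant and nondegenerate. By \Cref{lem:stabilizer-invariant-subspaces}, the only nondegenerate invariant subspaces are $0$ and $L$, because $V\oplus0$ and $0\oplus V^*$ are degenerate. So exactly one orbit is active, and it carries all of $L$.

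It remains to rule out an orbit of size $r\geq2$. In that case the image of $G$ in $\operatorname{Sp}(L)$ preserves a decomposition into $r$ mutually orthogonal nondegenerate summands. Its stabilizer of one summand then has index at most $r$, and $\GL(V)$ would have a subgroup of index between $2$ and $r$ whose invariant subspaces include a proper nondegenerate one.

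I would try to avoid this by using transvections. $\GL(V)$ is generated by transvections $\rho(\tau)$, and each of these fixes a hyperplane of $V$ and the corresponding hyperplane of $V^*$, so it fixes a codimension-$2$ subspace of $L$ pointwise. A transvection mapping one summand $L_E$ onto a different summand $L_{E'}$ moves all of $L_E\oplus L_{E'}$, which has dimension at least $4$. Combined with the codimension-$2$ fixed space, this forces every summand to be $2$-dimensional. Then I would check the structure directly. Transvections preserving every block act on each block as an element fixing most of it, and a swap moves $4$ dimensions. Either the transvection is trivial on all but one block, or it swaps blocks. Since transvections generate $\GL(V)$, which is irreducible on $V$, this would yield a permutation-type $\GL(V)$-module structure incompatible with \Cref{lem:stabilizer-invariant-subspaces}, for $k\geq3$.

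If this gets messy, the fallback mirrors the CSS shear argument. Pick nonzero $x\in L_C\cap(V\oplus0)$-type vectors in different blocks; more precisely, use $\rho(F)$ for an elementary $F$ sending a basis vector of $V$ to a sum across blocks, and show it cannot permute the blocks.

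\emph{Conclusion.} Once uniqueness holds, $\Gamma_*$ is $G$-invariant and $L_*=L$. Two elements of $G$ with the same restriction to $\Gamma_*$ then act identically on $L$. So $G\twoheadrightarrow\rho(\GL(V))\cong\GLtwo{k}$ factors through the image $H\leq\Aut(\Gamma_*)$, which gives the required quotient.
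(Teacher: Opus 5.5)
Your size and degree bounds, the decomposition $L=\bigoplus_C L_C$ into nondegenerate symplectically orthogonal blocks, and the orbit-sum step are all correct. In the orbit-sum step, each $G$-orbit of active blocks spans a nondegenerate invariant subspace, so by \Cref{lem:stabilizer-invariant-subspaces} exactly one orbit is active.

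The gap is the step you yourself flag: ruling out a single orbit of $r\geq2$ blocks. Equivalently, you must show that $\rho(\GL(V))$ fixes every nonzero $L_C$. Neither of your routes closes this.

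\emph{Transvection route.} The claim that a transvection sending $L_E$ to $L_{E'}$ ``moves all of $L_E\oplus L_{E'}$'' is false. Over $\F_2$ transvections are involutions, so such an element swaps the two blocks and fixes the nonzero diagonal $\{u+\rho(\tau)u:u\in L_E\}$. The correct bound is only $\operatorname{rank}(\rho(\tau)-I)\geq\dim L_E$. (If your claim were true, it would contradict the codimension-$2$ fixed space outright, rather than force $\dim L_E=2$.) The correct count shows only that each transvection induces either the identity or a single transposition on the blocks. Your final assertion, that this yields a ``permutation-type module'' incompatible with \Cref{lem:stabilizer-invariant-subspaces}, is never argued and does not follow from that lemma.

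\emph{Shear fallback.} It presupposes block vectors of type $V\oplus0$. A nondegenerate $L_C$ need not meet $V\oplus0$ at all. Moreover, $\rho(\GL(V))$ is not transitive on nonzero vectors of $L$, since it preserves whether $x=0$, whether $f=0$, and the value $f(x)$. So you cannot simply send $x$ to $x+y$.

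What is missing is a group-theoretic input that prevents $\GLtwo{k}$ from acting nontrivially on a small set of blocks. The paper supplies it in three steps:
\begin{enumerate}
\item The kernel of $\PAut_0\to\GL(V)$ acts trivially on $L$, so the block permutation action factors through a homomorphism $\GL(V)\to S_r$.
\item Every nonzero $L_C$ is nondegenerate, hence $\dim L_C\geq2$, which gives $r\leq k$.
\item For $k\geq3$, $\GLtwo{k}$ is simple and $|\GLtwo{k}|>k!\geq r!$, so this homomorphism is trivial.
\end{enumerate}
Hence every $L_C$ is $\GL(V)$-invariant, and \Cref{lem:stabilizer-invariant-subspaces} directly forces each nonzero $L_C$ to equal $L$. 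Neither the orbit-sum reduction nor the transvection analysis is needed.

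Your transvection route can be rescued, but only with a comparable input. Transvections generate $\GLtwo{k}$, so the image of $\GL(V)$ in $S_r$ is a transitive group generated by transpositions, i.e.\ all of $S_r$. Composing with the sign map would then give $\GLtwo{k}$ a subgroup of index $2$, contradicting its perfectness for $k\geq3$.
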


\begin{proof}
  The size and degree estimates follow immediately from \Cref{lem:stabilizer-atomic-incidence}, just as in the CSS case.
  For a connected component $C$, let $I_C$ be its physical-coordinate set and let $S_C\leq\cP(I_C)$ be the span of its atoms (see \cref{eq:pauli-space}).
  Since the atoms span $S$,
  \[
    S=\bigoplus_C S_C, \qquad L=S^{\perp_s}/S=\mathop{\bigoplus_C} L_C, \qquad L_C:=S_C^{\perp_s}/S_C.
  \]
  Every $L_C$ is nondegenerate.
  Indeed, if a class represented by $u\in S_C^{\perp_s}$ is orthogonal to all of $S_C^{\perp_s}/S_C$, then $u\in(S_C^{\perp_s})^{\perp_s}=S_C$, so the class is zero.

  Let $\PAut_0\leq \PAut(S)$ be the inverse image of the standard subgroup $\rho(\GL(V))$ under the logical action.
  It permutes the nonzero summands $L_C$.
  Let $r$ be the number of nonzero summands.
  The kernel of $\PAut_0\twoheadrightarrow\GL(V)$ acts trivially on $L$ and hence fixes every $L_C$, so the permutation action factors as
  \[
    \PAut_0\twoheadrightarrow\GL(V)\longrightarrow S_r.
  \]
  Every nonzero $L_C$ is nondegenerate and therefore has dimension at least two, giving $2k \ge r\cdot2$ and hence $r\leq k$.
  For $k\geq3$, the group $\GL(V)\cong\GLtwo{k}$ is simple.
  A nontrivial homomorphism $\GLtwo{k}\to S_r$ would therefore be injective, which is impossible because $r\leq k$ and $|\GLtwo{k}|>k!$.
  Hence every $L_C$ is individually $\GL(V)$-invariant.

  By \Cref{lem:stabilizer-invariant-subspaces}, a proper nonzero invariant candidate is $V\oplus0$ or $0\oplus V^*$.
  Both are Lagrangian and have degenerate restricted symplectic form, whereas every $L_C$ is nondegenerate.
  Thus any nonzero $L_C$ must equal all of $L$.
  There is consequently a unique logically active component $\Gamma_*$, and it is $\PAut_0$-invariant.

  Let $H$ be the image of $\PAut_0\to\Aut(\Gamma_*)$.
  Two elements having the same restriction to $\Gamma_*$ induce the same action on $L_*=L$.
  The logical action therefore factors through a surjection $H\twoheadrightarrow\GLtwo{k}$.
\end{proof}

The remainder of the CSS proof is insensitive to the CSS decomposition: it uses only the bounded-degree graph and its large logical quotient.
We therefore obtain the generalized length lowerbound.

\begin{lemma}[Stabilizer length lowerbound]
  \label{lem:stabilizer-n-lowerbound}
  For every integer $W\geq1$, there is a constant $C_W$ such that every binary phantom stabilizer code $S$ encoding $k\geq3$ qubits with $w(S)\leq W$ and $d(S)>W$ satisfies \[ n\geq 2^{k(k-1)/2-C_Wk}. \]
  In particular, $\log n=\Omega_W(k^2)$.
\end{lemma}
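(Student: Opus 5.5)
The proof will mirror \Cref{lem:css-n-lowerbound}, with the stabilizer versions of each ingredient swapped in. Since $w(S)\leq W$ and $d(S)>W$, \Cref{lem:stabilizer-atomic-incidence} applies and gives the canonical atom set $\mathcal A(S,W)$. Because $k\geq3$ and $S$ is phantom, \Cref{lem:active-stabilizer-component} then yields the following:
\begin{itemize}
  \item a unique logically active connected component $\Gamma_*$ of $\Gamma(S,W)$;
  \item the vertex bound $|V(\Gamma_*)|\leq|V(\Gamma)|\leq(1+F_{\mathrm{stab}}(W))n$;
  \item the degree bound $\Delta(\Gamma_*)\leq\Delta_{\mathrm{stab}}(W)$, which is a constant depending only on $W$;
  \item a group $H\leq\Aut(\Gamma_*)$ together with a surjection $H\twoheadrightarrow\GLtwo{k}$.
\end{itemize}

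Next, fix any vertex $v\in V(\Gamma_*)$ and apply \Cref{lem:bounded-degree-quotient} with $\Delta=\max\{2,\Delta_{\mathrm{stab}}(W)\}$ and $S=\GLtwo{k}$. This gives $|Hv|\geq|\GLtwo{k}|_{>\Delta}$. Then \Cref{lem:large-prime-part} with the same $\Delta$ bounds the right-hand side below by $2^{k(k-1)/2-C_\Delta k}$. The orbit $Hv$ lies inside $V(\Gamma_*)$, so we get the chain
\[
  (1+F_{\mathrm{stab}}(W))\,n\;\geq\;|V(\Gamma_*)|\;\geq\;|Hv|\;\geq\;2^{k(k-1)/2-C_\Delta k}.
\]

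It remains to absorb the constant factor $1+F_{\mathrm{stab}}(W)$ into the exponent. Since $k\geq3\geq1$, we have $\log_2(1+F_{\mathrm{stab}}(W))\leq k\log_2(1+F_{\mathrm{stab}}(W))$. Setting $C_W:=C_\Delta+\log_2(1+F_{\mathrm{stab}}(W))$ then gives $n\geq2^{k(k-1)/2-C_Wk}$. Taking logarithms, for large $k$ the quadratic term dominates and $\log n=\Omega_W(k^2)$. For the finitely many small $k$ the statement is trivial, since $n\geq k\geq1$ makes $\log n$ bounded below by a positive constant for $k\ge 3$, and the constant in $\Omega_W$ can be adjusted to cover them.

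The argument is a direct composition of earlier results, so the only real care needed is in the first step. \Cref{lem:active-stabilizer-component} restricts to the subgroup $\PAut_0$ that realizes the standard logical $\GL(V)$ rather than all of $\PAut(S)$. I have to make sure that the surjection onto $\GLtwo{k}$ and the bounded-degree graph refer to the same $H\leq\Aut(\Gamma_*)$, which that lemma already provides. I also have to check that $\Delta_{\mathrm{stab}}(W)\geq2$, which holds because $F_{\mathrm{stab}}(W)\geq 3\cdot 2\geq2$ for $W\geq1$.
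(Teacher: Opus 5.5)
Your proposal is correct and follows the paper's proof essentially verbatim: you invoke \Cref{lem:active-stabilizer-component} to get the bounded-degree active component $\Gamma_*$ with $H\twoheadrightarrow\GLtwo{k}$, apply \Cref{lem:bounded-degree-quotient,lem:large-prime-part} at any vertex of $\Gamma_*$, and absorb the factor $1+F_{\mathrm{stab}}(W)$ into the exponent using $k\geq1$. Your additional checks, namely that $\Delta_{\mathrm{stab}}(W)\geq2$ and the handling of small $k$ in $\log n=\Omega_W(k^2)$, only make explicit what the paper leaves implicit.
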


\begin{proof}
  By \Cref{lem:active-stabilizer-component}, the active component $\Gamma_*$ has maximum degree at most $\Delta_{\mathrm{stab}}(W)$, at most $(1+F_{\mathrm{stab}}(W))n$ vertices, and an automorphism subgroup $H$ surjecting onto $\GLtwo{k}$.
  Apply \Cref{lem:bounded-degree-quotient,lem:large-prime-part} to any vertex of $\Gamma_*$, with $\Delta=\Delta_{\mathrm{stab}}(W)$.
  This gives
  \[
    (1+F_{\mathrm{stab}}(W))n \geq |V(\Gamma_*)| \geq 2^{k(k-1)/2-C_{\Delta_{\mathrm{stab}}(W)}k},
  \]
  and the factor depending only on $W$ can be absorbed by enlarging the constant in the exponent.
\end{proof}

\stabilizernogo*

\begin{proof}
  The proof of \Cref{thm:css-no-go} applies verbatim after passing to an infinite subsequence on which the bounded integer $w(S_t)$ is constant and using \Cref{lem:stabilizer-n-lowerbound}.
  The hypothesis $k_t\geq3$ holds automatically for all sufficiently large $t$ because there are only finitely many stabilizer codes of bounded length.
\end{proof}

\section{Conclusions and open questions}\label{sec:conclusions}

The results give a sharp scaling-level picture of binary phantom stabilizer codes.
The hardness and constructions are witnessed within the CSS subclass, while the distance obstruction holds for the full stabilizer class.

\begin{center}
  \renewcommand{\arraystretch}{1.25}
  \begin{tabular}{ll}
    \toprule
    Regime for phantom stabilizer codes           & Status established here                                \\
    \midrule
    Recognition, even for $k=2$                   & GI-hard, already on CSS inputs                         \\
    $w,\Delta\leq9$, $k=\Theta(\log n)$, $d=3$    & CSS qLDPC witnesses, $n=7(2^k-1)(2^k-1-k)$             \\
    $w,\Delta=O(1)$, $k=\Theta(\log n)$, $d\ge D$ & CSS qLDPC witnesses at distance at least any fixed $D$ \\
    $w=O(1)$, $k=\omega(\sqrt{\log n})$           & General stabilizers satisfy $d\leq w$ eventually       \\
    \bottomrule
  \end{tabular}
\end{center}

The construction and no-go theorem are complementary statements about phantom stabilizer codes.
For any prescribed constant distance $D$, the degree-split CSS construction followed by fixed-block concatenation yields a rate optimal qLDPC family of phantom stabilizer codes with both check weight and qubit degree bounded in terms of $D$ alone.
Conversely, in every bounded-check-weight family of binary phantom stabilizer codes with $k=\omega(\sqrt{\log n})$, the distance cannot exceed the intrinsic check weight eventually.
Thus qLDPC sparsity and maximal logarithmic logical scaling are compatible with phantomness, but only at bounded distance; more generally, any growing distance above the square-root-logarithmic threshold forces stabilizer-generator weight itself to grow.

There are several natural directions beyond the present work.

\begin{problem}[Exact complexity]
  Determine which complexity class the phantomness recognition problem belongs to, or identify a specific decision problem to which it is polynomial-time equivalent.
  \Cref{thm:GI-hard} proves GI-hardness even for $k=2$ and CSS inputs, but does not settle this classification.
\end{problem}

\begin{problem}[Below the no-go threshold]
  The quantitative theorem allows $d>W$ only when $n$ is at least $2^{\Omega_W(k^2)}$.
  Is this scale attainable?
  More generally, characterize the optimal tradeoff among $n$, $k$, $w$, and $d$ for phantom stabilizer codes when $k=O(\sqrt{\log n})$.
\end{problem}


\begin{problem}[Full phantomness beyond the present model]
  Can either obstruction be avoided while retaining the full standard logical $\GLtwo{k}$ action by enlarging the coding or implementation model?
  In particular, consider the following possibilities.
  \begin{enumerate}
    \item
          Consider more general, possibly non-stabilizer subspace codes or subsystem codes.
    \item
          Enlarge the admissible physical permutations by introducing Floquet-style code switching.
  \end{enumerate}
\end{problem}

\begin{problem}[Partial phantomness]
  Can either obstruction be avoided by requiring only a task-specific subgroup rather than full phantomness?
  Two natural targets are:
  \begin{enumerate}[label=(\roman*)]
    \item
          \emph{SWAP phantomness}, in which physical permutations realize every permutation of a fixed logical basis, as in \Cref{sec:logical-relabelling};
    \item
          \emph{GHZ-preparation phantomness}, in which a physical permutation need only realize the logical transformation \[ |+\rangle|0\rangle^{\otimes(k-1)}\longmapsto\frac{|0^k\rangle+|1^k\rangle}{\sqrt2}. \]
  \end{enumerate}
\end{problem}

\section*{Acknowledgment}

We thank Jonathan Allcock for helpful discussions.
The GI-hardness result for SWAP phantomness (\Cref{sec:logical-relabelling}) and the open problem on partial phantomness are due to him.

The results of this paper were developed primarily through interactions with ChatGPT and Rethlas \cite{ju2026automated}, which provided the main technical approaches.
Initial prompts about the authors' conjecture that phantomness recognition is NP-complete led ChatGPT 5.6-Sol to suggest classical code equivalence; a follow-up request produced the GI-hardness reduction.
Prompts about the authors' conjecture that bounded check weight forces bounded logical dimension produced a counterexample, and further requests for bounded qubit degree led to the qLDPC construction.
Attempts to obtain growing distance yielded only constant-distance constructions, prompting the authors to conjecture the distance no-go theorem.
Rethlas, using Codex with \texttt{gpt-5.6-sol}, produced the CSS proof, and further interactions in ChatGPT extended it to general stabilizer codes.
The initial manuscript draft was also generated by ChatGPT, which the authors manually checked and revised.

With Archon's assistance \cite{ju2026automated}, the authors formalized and machine-checked in Lean the many-one reduction (excluding its polynomial-time assertion), the complete qLDPC construction, and the CSS distance no-go theorem.
The Lean repository will be made public soon.

Throughout the investigation, the authors formulated questions and conjectures, evaluated proposed arguments, and redirected unsuccessful approaches.
The authors take full responsibility for the correctness and presentation of the final manuscript.

\bibliographystyle{halpha}
\bibliography{refs}
\end{document}